\author{Felix Bauckholt}
\author{Kanstantsin Pashkovich}
\author{Laura Sanit\`a}
\affil{Department of Combinatorics and Optimization, University of Waterloo,\\ 200 University Avenue West
Waterloo, ON, Canada N2L 3G1\\
  \texttt{felixbauckholt@gmail.com, kpashkov@uwaterloo.ca, laura.sanita@uwaterloo.ca}}
\title{On the approximability of the stable marriage problem with one-sided ties}
\date{}
\begin{document}
\maketitle

\begin{abstract}
The classical stable marriage problem asks for a 
matching between a set of men and a set of women 
with no blocking pairs, which are pairs formed by a man and a woman
who would both prefer switching from their current status to be paired up together.
When both men and women have strict preferences over the opposite group,
all stable matchings have the same cardinality, and
the famous Gale-Shapley algorithm can be used to find
one. Differently,
if we allow ties in the preference lists, finding
a stable matching of maximum cardinality is an NP-hard problem,
already when the ties are one-sided, that is, they appear only in the preferences of one group.
For this reason, many researchers have focused on developing \emph{approximation algorithm}
for this problem.

In this paper, 
we give a refined analysis of an approximation algorithm 
given by Huang and Telikepalli (IPCO'14) for the stable marriage problem with one-sided ties,
which shows an improved $\frac{13}{9}$-approximation
factor for the problem. Interestingly, our analysis is tight. 
\end{abstract}

\section{Introduction}
The stable marriage problem is one of the most classical and famous algorithmic game theory problems.
We are here given a bipartite graph $G=(A\cup B, E)$, where, according to the widely used terminology, 
$A$ represents a set of men, and $B$ represents a set of women. 
Each individual $a \in A$ (resp. $b \in B$) is adjacent to a subset $N(a) \subseteq B$ (resp. $N(b) \subseteq A$) of individuals,
representing the subset of people (s)he would accept to be matched to.
Furthermore, each $c \in A \cup B$ has strict preferences over the individuals in $N(c)$.
The goal is to compute a \tbdefined{stable matching} $M$, that is a matching which
does not contain a \tbdefined{blocking pair}. 
Formally, a blocking pair~$(a, b)$ for a matching $M$
is given by two individuals $a \in A$ and $b\in B$ with $(a, b) \in E\setminus M$,
such that $a$ is either unmatched in $M$ or he prefers $b$ to his current partner, 
and similarly $b$ is either unmatched in $M$ or she prefers $a$ to his current partner.
Intuitively, if there exists a blocking pair~$(a, b)$ for a given matching $M$,
then both $a$ and $b$ have an incentive to deviate from their current status, making the matching $M$
be an unstable outcome.  The seminal work of Gale and Shapley \cite{GaleShapley62} shows that a stable matching always exists, 
and gives an elegant polynomial-time algorithm to find one. Since then, 
stable matchings have been extensively studied in the literature, and many beautiful properties are known
about their structure. In particular, all stable matchings in a graph have the same cardinality \cite{ruralhospitals2}.

The situation becomes significantly different when the preferences among individuals are not strict anymore, 
and we allow \tbdefined{ties}. In fact, in the presence of ties, 
stable matchings can have different cardinalities, 
and computing a stable matching of maximum size is NP-hard (in fact, APX-hard) 
\cite{incomplete_ties,HardVariants,Iwamaetal07}. 
Due to the importance of finding stable matchings of large size in many real-world applications,
it is not surprising that many researchers in the algorithmic community 
have focused on developing \emph{approximation algorithms} for the stable marriage problem with ties 
\cite{algorithms,Kiraly11,Iwamaetal14,HuangKavitha2015,paluch_original,mcdermid_first_approx,Iwamaetal07}. 

A special case which received a lot of attention, is the stable
marriage with \tbdefined{one-sided} ties, i.e., where only one group
(say, the women) are allowed to have ties in their preferences over
the opposite group. This setting finds many applications in real-world instances, 
such as in the Scottish Foundation Allocation Scheme (SFAS) (we refer to \cite{Irving08}
for details). For this case, Kiraly \cite{Kiraly11} gave a 
$\frac{3}{2}$-approximation bound, later improved to a 
$\frac{25}{17}$-approximation bound by Iwama et al. \cite{Iwamaetal14}. 
Huang and Telikepalli \cite{HuangKavitha2015} developed a new algorithm and showed that their algorithm achieves the approximation factor at most  $\frac{22}{15}$. Subsequently, Radnai~\cite{Radnai} showed that the algorithm of Huang and Telikepalli \cite{HuangKavitha2015} has approximation factor at most  $\frac{41}{28}$.
Dean and Jalasutram \cite{BJ15} improved the analysis of the algorithm proposed by 
Iwama et al. \cite{Iwamaetal14} to $\frac{19}{13}$, and this has been the best known
approximation factor for the problem until recently.

From an hardness point of view, the strongest bounds
are given in \cite{Halldorsson:2007:IAR:1273340.1273346}, which
showed that the stable marriage problem with one-sided ties is NP-hard to approximate
within a factor of $\frac{21}{19}$, and strengthened the bound to $\frac{5}{4}$
assuming Unique Game Conjecture.

\paragraph{Our results and techniques.} 
In this paper, we give a refined analysis of the 
algorithm of Huang and Telikepalli \cite{HuangKavitha2015}, 
which shows that the algorithm is indeed a $\frac{13}{9}$-approximation
algorithm for the problem. Note that the factor $\frac{13}{9} \approx 1.444$
is strictly better than the factor $\frac{19}{13} \approx 1.461$ given in \cite{BJ15}.

An interesting feature of the combinatorial algorithm of \cite{HuangKavitha2015} is that, 
differently from the classical Gale-Shapley procedure, 
each man can make two proposals and each woman can
accept two proposals. When men stop proposing, 
the set of accepted proposals yields a graph $G'$ where each individual
can have degree up to 2, and the matching $M$ output by the algorithm
will be a maximum cardinality matching in $G'$ which satisfies some additional properties. 

Given a stable matching $M$ output by their algorithm and a 
stable matching $\opt$ of maximum cardinality, 
the idea of \cite{HuangKavitha2015} for proving the approximation
bound is the following. 
They consider
the $M$-augmenting paths in the symmetric difference ($M\oplus \opt$)
of $M$ and $\opt$. 
It is not difficult to show that $M$-augmenting paths of size 3 (i.e. with 3 edges) cannot exist, 
if the algorithm is executed properly. 
However, there could be an augmenting path of size 5 or larger.
In particular, augmenting paths of size 5 are problematic, since for such paths
the ratio between the number of the edges in $\opt$ and $M$ is $3/2$, and therefore exceeds
the claimed approximation bound.
Huang and Telikepalli \cite{HuangKavitha2015} are able to \emph{charge} each augmenting path of size 5 to some component of 
$M\oplus \opt$ of a larger size. The crux of the analysis lies in defining a map 
between augmenting paths of size 5, and 
large components of $M\oplus \opt$,
such that each large component 
does not get charged too many times.
In order to define this map, 
they study the structure of an \emph{auxiliary} directed graph $H$ which
is obtained from $G'$ by orienting its edges and contracting some
edges of $M$ which belong to $5$-augmenting paths.

Our improvement is based on defining 
a better mapping between augmenting paths of size 5, and 
large components of $M\oplus \opt$. In order to do this, we 
construct a slightly different auxiliary directed graph $H$, 
where we contract more edges of $M$ (not just the ones in 5-augmenting paths as is done in~\cite{HuangKavitha2015}).
We study the structure of this new graph, and introduce two crucial concepts:
\emph{popular women}, and \emph{jumps}. In particular, jumps
are central to our refined charging scheme, and together with 
the concept of popular women, we are able to provide a better analysis of the algorithm.

Interestingly, an example by Radnai~\cite{Radnai} shows that our analysis is tight. For the sake of completeness, also in this paper we report an instance
where the algorithm effectively computes a solution which is away from the 
optimal one by a factor of $\frac{13}{9}$.  
This shows that our charging scheme is essentially best possible,
and modifications in the algorithm are necessary in order to further improve the approximation bound.

\smallskip
\noindent{\bf Remark:} Very recently, Lam and Plaxton \cite{LP18} have posted online a paper
showing a $\ln(4) \approx 1.386$ algorithm for the problem. This is achieved by improving
the algorithm and technique of \cite{Iwamaetal14} and \cite{BJ15}.
The work of Lam and Plaxton \cite{LP18} was done indipendently from our work. 
Though their factor is strictly better
than ours, their algorithm involves solving an LP, while 
the algorithm of Huang and Telikepalli \cite{HuangKavitha2015}, that we use here, is a purely combinatorial algorithm,
which is one of the nicest aspects of it. Therefore our result is of independent interest.

\section{Algorithm by Huang and Telikepalli}
\label{section:algorithm}

First, let us describe the algorithm by Huang and Telikepalli~\cite{HuangKavitha2015}. 
A stable matching is computed in two phases. In the first phase, which is the proposal
phase, men (in arbitrary order) make proposals to women, 
while women accept or reject proposals. It is important to highlight that at any point of time
each men can simultaneously make up to two proposals, and women are allowed to accept
up to two proposals. This phase stops when all men stop making proposals.
Subsequently, one computes a maximum cardinality matching with some properties in the graph induced by all the accepted proposals. In this graph,
every node can have degree at most two (since both men and women are allowed to make/accept up to two proposals). 
We describe the process  in more details in the next paragraphs.

\subsubsection*{How men propose.} Each man $a\in A$ has two proposals $p^1_a$ and $p^2_a$. Initially, both $p^1_a$ and $p^2_a$ are offered to the first woman on $a$'s list. If a proposal $p^i_a$, $i=1,2$ is rejected by the $k$-th woman on $a$'s list, then the proposal $p^i_a$ goes to the $(k+1)$-st woman on $a$'s list; in case the $k$-th woman is the last woman on $a$'s list, then the proposal $p^i_a$ goes again to the first woman on $a$'s list.

At each moment of the algorithm every man has one of the following three statuses: \tbdefined{basic}, \tbdefined{$1$-promoted} and \tbdefined{$2$-promoted}. Each man $a\in A$ starts as a basic man. If each woman in $N(a)$ at least once rejected any proposal of $a$, the man $a$ becomes $1$-promoted. If afterwards each woman in $N(a)$ at least once rejected any proposal of $a$ as a $1$-promoted man, the man $a$ becomes $2$-promoted. Finally, if each woman in $N(a)$ at least once rejected any proposal of $a$ as a $2$-promoted man, the man $a$ stops making proposals.

\subsubsection*{How women accept proposals.}
 A woman $b$, who gets a proposal from a man $a$, always accepts the current proposal of $a$ if at the moment $b$ holds at most one proposal, excluding the current proposal of $a$. Otherwise, if $b$ holds already two proposals when she receives the proposal of $a$, 
 then $b$ rejects a \tbdefined{least desirable proposal} among these three 
 (see Definition~\ref{def:proposals_preferences}), and keeps the other two (if more than one proposal 
 among the ones considered satisfies Definition 1, ties are broken arbitrarily). 

\begin{definition}
\label{def:proposals_preferences}
For a woman $b$, proposal $p^i_a$ is superior to $p^{i'}_{a'}$ if one of the following is true
\begin{itemize}
\item $b$ prefers $a$ to $a'$.
\item $b$ is indifferent between $a$ and $a'$; $a$ is currently $2$-promoted while $a'$ is not $2$-promoted.
\item $b$ is indifferent between $a$ and $a'$; $a$ is currently $1$-promoted while $a'$ is basic.
\item $b$ is indifferent between $a$ and $a'$; $a$ and $a'$ are basic; moreover, woman $b$ has already rejected a proposal of $a$ while $b$ did not reject any proposal of $a'$. 
\end{itemize}
A proposal $p^i_a$ is a \tbdefined{least desirable proposal} among a set of proposals that a woman $b$ has, if
it is  not superior to any of the proposals in the set.
\end{definition} 
 
\subsubsection*{What is the output matching.}

Let $G'$ be the bipartite (simple) graph with the node set $A\cup B$ and the edge set $E'$, where $E'$ consists of the edges $(a,b)$, $a\in A$ and $b\in B$ such that at the end of the algorithm $b$ holds at least one proposal of $a$. 
Clearly, the degree of a node in $G'$ is at most two, since each man has at most two proposals and each woman has at most two proposals accepted at each time point. Let $M$ be a maximum cardinality matching in $G'$, where all degree two nodes of $G'$ are matched.
In~\cite{HuangKavitha2015}, it was shown that the matching $M$ is a stable matching in the graph $G$.

\subsubsection*{Useful properties and notations.}
Let $\opt$ be a maximum cardinality stable matching in $G$. Then $M\oplus \opt$ consists of paths and cycles. We will say that a path is an \tbdefined{augmenting} path in $M\oplus \opt$ if it is augmenting with respect to $M$.
Unless explicitly stated otherwise, a man $a\in A$ is called \tbdefined{basic}, \tbdefined{$1$-promoted} and \tbdefined{$2$-promoted} if after the end of the algorithm $a$ is basic, $1$-promoted or $2$-promoted, respectively. A man $a\in A$ is called \tbdefined{promoted} if $a$ is either $1$-promoted or $2$-promoted.

The following two lemmas are from~\cite{HuangKavitha2015}, and give some useful properties of augmenting paths in $M\oplus \opt$. 
For the completeness sake we provide the proofs of these two lemmas in Appendix.

\begin{lemma}[\cite{HuangKavitha2015}]
\label{lemma:three_path_original_paper}
There is no $3$-augmenting path in $M\oplus \opt$.
\end{lemma}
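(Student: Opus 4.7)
The plan is to argue by contradiction. Suppose $P$ is a $3$-augmenting path in $M\oplus\opt$. Write its vertices in order as $a_1, b_1, a_2, b_2$ with $(a_1,b_1),(a_2,b_2)\in\opt$ and $(a_2,b_1)\in M$; the unmatched endpoints of $P$ are then $a_1\in A$ and $b_2\in B$ (the reversed orientation, in which the unmatched woman sits on the other end of the path, is handled symmetrically by interchanging the roles).

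First I would collect two structural observations. Using stability of $M$ at $(a_2,b_2)\in E\setminus M$ and stability of $\opt$ at $(a_2,b_1)\in E\setminus\opt$, together with the strictness of men's preferences, I deduce that $a_2$ strictly prefers $b_1$ over $b_2$ and that $b_1$ is indifferent between $a_1$ and $a_2$. Second, the unmatched man $a_1$ must be stopped by the algorithm: otherwise both of $a_1$'s proposals are held at termination, and one checks that either they lie at a single woman (producing an unmatched-unmatched $G'$-edge and contradicting the maximality of $M$) or at two distinct women (giving $a_1$ $G'$-degree $2$, so $a_1$ would be matched).

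Since $a_1$ is stopped, $b_1\in N(a_1)$ rejected a $2$-promoted proposal of $a_1$ at some moment, while holding two other proposals each not less desirable than $a_1$'s $2$-promoted one. The key invariance I would prove is that $b_1$'s pair of held proposals retains this property thereafter: a newly arriving proposal that does not satisfy the property is, by a direct case analysis of Definition~\ref{def:proposals_preferences}, strictly less desirable than each of $b_1$'s two currently held proposals, so it is least desirable among the three and gets rejected. Since $(a_2,b_1)\in M\subseteq E'$, $b_1$ holds a proposal of $a_2$ at termination; this proposal must therefore be not less desirable than a $2$-promoted proposal of $a_1$, and combined with $b_1$'s indifference between $a_1$ and $a_2$ this forces $a_2$ to be $2$-promoted at termination. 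In particular, $a_2$ must have been rejected as a $1$-promoted man by every woman in $N(a_2)$, so $b_2\in N(a_2)$ rejected a $1$-promoted proposal of $a_2$ at some moment.

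The contradiction now comes from $b_2$. After the rejection of $a_2$'s $1$-promoted proposal, $b_2$ permanently holds exactly two proposals: held proposals are never voluntarily withdrawn, and the arrival of a third proposal triggers only a one-for-one swap. At termination $b_2$ is unmatched by $M$, so she has $G'$-degree at most $1$; together with holding two proposals, this forces both to come from the same man $a^*$. But then both of $a^*$'s proposals lie at $b_2$ and nowhere else, so $a^*$ has $G'$-degree exactly $1$ with $b_2$ as his only neighbour, and the $G'$-edge $(a^*,b_2)$ joins two vertices that are unmatched by $M$; adding it to $M$ strictly enlarges the matching, contradicting maximality. I expect the delicate step to be the invariance claim above: it reduces to checking that if a proposal fails to be not-less-desirable than a $2$-promoted proposal of $a_1$, then every proposal that does satisfy this property is in fact strictly superior to it, a statement which is a clean case analysis but requires paying attention to how the four ``superior'' conditions of Definition~\ref{def:proposals_preferences} interact with rejection histories and promotion upgrades.
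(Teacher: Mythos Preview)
Your argument is correct in outline, but it is considerably longer than necessary and contains a small gap: to conclude that $b_1$ is \emph{indifferent} between $a_1$ and $a_2$ you also need stability of $M$ at the edge $(a_1,b_1)$ (which gives $a_2\geq_{b_1}a_1$, since $a_1$ is unmatched in $M$); the two stability conditions you cite only yield $a_1\geq_{b_1}a_2$. Once this is added, your invariance argument and the final contradiction via $a^*$ go through.

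The paper's proof is much shorter because it reverses the direction of your last step. You spend most of your effort showing that $a_2$ is $2$-promoted (via indifference at $b_1$ plus the invariance claim) and then deduce that $b_2$ must hold two proposals, contradicting maximality of $M$. The paper observes the contrapositive first: since $b_2$ is unmatched by $M$, she never rejected anyone (your own $a^*$ argument shows she can hold at most one proposal, and in fact she holds none from $a_2$'s viewpoint), so in particular $b_2$ never rejected $a_2$, hence $a_2$ is \emph{basic}. Now $b_1$ rejected the $2$-promoted $a_1$ yet holds a proposal of the basic man $a_2$; this forces $a_2>_{b_1}a_1$ strictly. Together with $b_1>_{a_2}b_2$ (which you also derived), $(a_2,b_1)$ is a blocking pair for $\opt$, and the proof ends there---no invariance lemma, no indifference, no $a^*$. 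In effect your detour through ``$a_2$ is $2$-promoted'' rediscovers, in contrapositive form, exactly the fact ``$b_2$ unmatched $\Rightarrow$ $a_2$ basic'' that the paper uses up front.
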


\begin{lemma}[\cite{HuangKavitha2015}]
\label{lemma:five_path_original_paper}
Let $\alpha_0 - \beta_0 - \alpha_1 -  \beta_1 - \alpha_2  - \beta_2$ be a $5$-augmenting path in $M\oplus \opt$, where $\alpha_i\in A$, $\beta_i\in B$ for $i=0,1,2$, then
\begin{itemize}
\item $\alpha_0$ is a $2$-promoted man and has been rejected by $\beta_0$ as a $2$-promoted man.
\item $\alpha_2$ is basic and $\alpha_2$ prefers $\beta_1$ to $\beta_2$.
\item $\alpha_1$ is not $2$-promoted and $\alpha_1$ prefers $\beta_1$ to $\beta_0$.
\item $\beta_1$ is indifferent between $\alpha_1$ and $\alpha_2$.
\item In $G'$, $\beta_0$ has degree $1$ if and only if $\alpha_1$ has degree $1$.
\item In $G'$, $\beta_1$ has degree $1$ if and only if $\alpha_2$ has degree $1$.
\end{itemize}
\end{lemma}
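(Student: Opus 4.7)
The plan is to dispatch the six bullets in order, combining stability of $\opt$ and of $M$ with a monotonicity observation on the proposal dynamics, namely: (i) once a woman $\beta$ rejects a proposal of a man $a$ who then has status $\sigma$, no proposal $\beta$ subsequently holds can be strictly inferior, in the sense of Definition~\ref{def:proposals_preferences}, to the proposal of $a$ with status $\sigma$; and (ii) a woman's count of held proposals is non-decreasing and remains at $2$ once it first hits $2$. The first bullet follows directly from (ii): since $\alpha_0$ is unmatched in $M$ and $M$ matches every $G'$-degree-$2$ vertex, $\alpha_0$ has stopped proposing, so by the stopping rule every woman in $N(\alpha_0) \ni \beta_0$ has rejected a proposal of his as a $2$-promoted man. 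For the second bullet, $\beta_2$ being unmatched bounds her $G'$-degree by $1$, and by (ii) $\beta_2$ has never rejected anyone; in particular $\beta_2 \in N(\alpha_2)$ has never rejected $\alpha_2$, so $\alpha_2$ is still basic; finally, $M$-stability applied to the non-edge $(\alpha_2,\beta_2)$ forces $\alpha_2$ to strictly prefer his $M$-partner $\beta_1$ to $\beta_2$.

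For the third and fourth bullets I first note the intermediate fact~(E) that $\beta_1$ weakly prefers $\alpha_1$ to $\alpha_2$, obtained from stability of $\opt$ on $(\alpha_2,\beta_1)$ using that $\alpha_2$ prefers $\beta_1$ to $\beta_2$. To show $\alpha_1$ is not $2$-promoted, assume otherwise; then $\alpha_1$ was rejected by $\beta_1$ as $1$-promoted, so by (i) the basic proposal of $\alpha_2$ held by $\beta_1$ at the end is weakly superior to $\alpha_1$'s $1$-promoted proposal. Inspecting Definition~\ref{def:proposals_preferences}, a basic proposal cannot be weakly superior to a $1$-promoted one under indifference, so $\beta_1$ must strictly prefer $\alpha_2$ to $\alpha_1$, contradicting~(E). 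To show $\alpha_1$ prefers $\beta_1$ to $\beta_0$, assume the opposite; stability of $\opt$ on $(\alpha_1,\beta_0)$ forces $\beta_0$ to weakly prefer $\alpha_0$ to $\alpha_1$, and (i) applied to $\beta_0$'s rejection of $\alpha_0$ as $2$-promoted then says $\alpha_1$'s proposal at $\beta_0$ is weakly superior to $\alpha_0$'s $2$-promoted proposal; since $\alpha_1$ is not $2$-promoted, Definition~\ref{def:proposals_preferences} forces $\beta_0$ to strictly prefer $\alpha_1$ to $\alpha_0$, a contradiction. The fourth bullet then follows by combining the preference just proved with stability of $M$ on $(\alpha_1,\beta_1)$ and~(E).

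For the last two bullets I use (ii) together with bullet~1 and bullet~3. Bullet~5: $\beta_0$ rejected $\alpha_0$, so $\beta_0$'s proposal count hit $2$ and she holds exactly two proposals at the end; since one comes from $\alpha_1$, the second either also comes from $\alpha_1$ (both of whose proposals are then at $\beta_0$, giving $\alpha_1$ and $\beta_0$ both $G'$-degree $1$) or from a distinct man (giving both $G'$-degree $2$). Bullet~6 is analogous: since $\alpha_1$ prefers $\beta_1$ to $\beta_0$ and one of his proposals reached $\beta_0$, that proposal traversed $\beta_1$ first and was rejected there, so by (ii) $\beta_1$ also holds exactly two proposals at the end, and the same dichotomy gives the equivalence with $\alpha_2$'s degree. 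The step I expect to be the main obstacle is the clean formalization of (i), since superiority in Definition~\ref{def:proposals_preferences} depends on the proposers' current statuses which themselves evolve; the right inductive statement pins the rejected proposal's status to $\sigma$ while allowing the current holder's status to grow, and one verifies that each subsequent rejection preserves the property.
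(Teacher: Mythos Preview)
Your proof is correct and follows essentially the same route as the paper's: both dispatch the bullets in the same order using stability of $M$ and $\opt$ together with the monotonicity of the proposal dynamics. The main difference is presentational: you isolate the monotonicity facts (i) and (ii) and the intermediate inequality (E) explicitly, whereas the paper invokes them tacitly (e.g.\ it simply asserts ``since at the end $\beta_1$ has a proposal of the basic man $\alpha_2$, then $\beta_1$ prefers $\alpha_2$ to $\alpha_1$'' and, for the last two bullets, just notes that $\alpha_1,\alpha_2$ have both proposals accepted while $\beta_0,\beta_1$ hold two proposals).
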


Further in the paper, we use $\leq$, $\geq$, $>$, $<$, $\indiff$ to indicate preferences. For example, $a>_b c$ stands for $b$ strictly prefers $a$ to $c$. For a node $a$ in $G$, $M(a)$ and $\opt(a)$ stand for the node matched with $a$ by $M$ and $\opt$, respectively. If  $a$ is not matched by $M$ or $\opt$, we define $M(a)$ and $\opt(a)$ to be $\varnothing$, respectively.

\section{Tight Analysis}
\label{sec:tight_analysis}

As in \cite{HuangKavitha2015}, we will now construct a digraph $H$, starting from $G'$, which will be used for our charging scheme.
First, we orient all edges in the graph $G'$ from men to women.  Second, for each augmenting path of the form $\alpha_0 - \beta_0 - \alpha_1 - \dots - \alpha_k - \beta_k$ with $\alpha_i\in A$ and $\beta_i\in B$ for all $i=0,1,\ldots,k$, we contract the edge $(\alpha_1, \beta_0)$ into a single node (we call the resulting node an \tbdefined{$x$-node}) and we contract the edge $(\alpha_k, \beta_{k-1})$ into a single node (we call the resulting node a \tbdefined{$y$-node}). Note that this second step in the construction of the graph $H$ is different from the construction in~\cite{HuangKavitha2015}, which is based only on augmenting paths of length~$5$. We refer to the $x$- and $y$-nodes of $H$ as \tbdefined{red}; and to the other nodes of $H$ as \tbdefined{blue}.
For every node $v \in V(G')$, let $[v] \in V(H)$ be the corresponding (possibly an $x$- or a $y$-) node in $H$. We say that a node $v\in V(G')$ is \tbdefined{in an $x$-node} or \tbdefined{in a $y$-node} if  $[v]$ is an $x$-node or a $y$-node, respectively.
A crucial concept is that of critical arcs and good paths, defined below.

\begin{definition}
An arc $([a], [b])$ in $H$ with $a\in A$ and $b\in B$, is called  \tbdefined{critical} if all statements below hold
\begin{itemize}
\item $b \geq_a \opt(a)$.
\item $a$ is not $2$-promoted.
\item $(a, b) \notin M$.
\item $[a]$ is not a $y$-node.
\end{itemize}
\end{definition}

\begin{definition}
A (directed) path in $H$ is called \tbdefined{good} if it starts with a blue man, ends with a blue woman, and no edge of the path is in $M$.
\end{definition}

We now state a lemma which gives certain properties of the graph $H$. These properties are analogous to the properties
proven in~\cite{HuangKavitha2015}. However, since our definition of $H$ and our definition of a critical arc is different from the definitions in~\cite{HuangKavitha2015}, we provide a proof of this lemma in Appendix.

\begin{lemma}
\label{lemma:combined}
The graph $H$ satisfies the following statements.
\begin{itemize}
\item[(a)] \label{lemma:no_yx_arc} There is no arc from a $y$-node to an $x$-node.
\item[(b)] \label{lemma:critical_arc} No critical arc starts at a $y$-node and no critical arc ends at an $x$-node.
\item[(c)] \label{lemma:critical_arc_is_in_some_good_path} Every critical arc is contained in some good path.
\item[(d)] \label{lemma:good_path_has_at_most_one_critical_arc} Every good path has at most one critical arc.
\item[(e)] \label{lemma:good_paths_are_node_disjoint} Every two distinct good paths are node-disjoint.
\end{itemize}
\end{lemma}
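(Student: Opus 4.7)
The overall approach is to leverage three tools: the structural properties of augmenting paths in Lemmas \ref{lemma:three_path_original_paper} and \ref{lemma:five_path_original_paper} (suitably generalized to augmenting paths of arbitrary length), the stability of $M$ in $G$, and the orientation conventions of $H$. Recall that each arc of $H$ corresponds to an edge $(a,b) \in E(G')$ oriented from a man to a woman; hence in an $x$-node $[\alpha_1,\beta_0]$ only the $\alpha_1$-side can emit arcs and only the $\beta_0$-side can receive them, and similarly for a $y$-node $[\alpha_k,\beta_{k-1}]$. I will use these directional constraints, together with the preferences and promotion statuses of the men near the endpoints of augmenting paths, to restrict the behavior of red nodes.

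Parts (a) and (b) I would settle by case analyses on the men and women involved in the red nodes. For (a), an arc from a $y$-node $[\alpha_k,\beta_{k-1}]$ into an $x$-node $[\alpha'_1,\beta'_0]$ would correspond to an edge $(\alpha_k,\beta'_0) \in G' \setminus M$; combined with the fact (from an analogue of Lemma \ref{lemma:five_path_original_paper} applicable to augmenting paths of arbitrary length) that $\alpha'_0$ has been rejected by $\beta'_0$ while $2$-promoted, Definition \ref{def:proposals_preferences} would force $\alpha_k$ also to be $2$-promoted, contradicting the analogous restriction on the terminal man of an augmenting path or else producing a blocking pair for $M$ in $G$. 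For (b), the first claim is immediate from the critical-arc definition; for the second, an arc ending at $x$-node $[\alpha'_1,\beta'_0]$ runs into the same obstruction, since the non-$2$-promotion of the critical arc's source contradicts $\beta'_0$'s rejection of the $2$-promoted $\alpha'_0$.

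For part (c), given a critical arc $([a],[b])$ I would extend backward from $[a]$ and forward from $[b]$ along non-$M$ arcs until blue endpoints are reached. If $[a]$ is already blue, no backward extension is needed; otherwise $[a]$ is an $x$-node $[\alpha_1,\beta_0]$, and the natural predecessor is $[\alpha_0]$, which is blue since $\alpha_0$ is $M$-unmatched. A symmetric argument handles the forward direction, with part (a) ruling out transitioning from a $y$-node into an $x$-node. For part (d), two critical arcs on a single good path would yield two men each weakly preferring his arc-partner over his $\opt$-partner, with only non-$M$ edges between them; combined with the stability of $M$ and the strict preferences of men, this should produce a blocking pair for $M$. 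For part (e), if two good paths shared a node, the at most two edges of that node in $G'$ would have to be split between the two paths in a way incompatible with the blue-endpoint and no-$M$-edge conditions.

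The main obstacle is carrying out the forward and backward extensions for part (c) while respecting the red-node direction constraints and guaranteeing termination at blue endpoints; a good deal of the difficulty is already packaged into parts (a) and (b), but the extension argument also needs a generalization of Lemma \ref{lemma:five_path_original_paper} to $(2k+1)$-augmenting paths for arbitrary $k \geq 2$, since the new contraction scheme---unlike the one in~\cite{HuangKavitha2015}---contracts red edges from all augmenting paths and not only those of length $5$.
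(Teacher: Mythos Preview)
Your sketch is on the right track for parts (b) and (e), and your overall strategy---locate promotion conflicts at the augmenting-path endpoints and derive blocking pairs---is the same as the paper's. But there are two genuine gaps.

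\textbf{Part (c), backward extension.} You write that if $[a]$ is an $x$-node $[\alpha_1,\beta_0]$, ``the natural predecessor is $[\alpha_0]$, which is blue since $\alpha_0$ is $M$-unmatched.'' This is wrong: arcs of $H$ come from $G'$, not from $\opt$, and $(\alpha_0,\beta_0)$ is \emph{not} in $G'$---indeed $\beta_0$ has rejected the $2$-promoted $\alpha_0$, so she holds no proposal of his. The actual predecessor of the $x$-node, if any, is whichever man $a'\neq\alpha_1$ has an edge $(a',\beta_0)\in G'$, and there is no a priori reason this man is blue. The paper's argument is different: because $\beta_0$ rejected someone, she holds two proposals at the end, so the $x$-node is either isolated or has an incoming arc; that arc comes from a blue man or another $x$-node (since arcs from $y$-nodes into $x$-nodes are excluded by~(a)), and one iterates. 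A symmetric observation (each non-isolated $y$-node has an outgoing arc, because its man is basic and thus has both proposals accepted) drives the forward extension. Your proposal misses this degree-two mechanism entirely.

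\textbf{Part (d).} You propose to derive a blocking pair for $M$ from two critical arcs on one good path. I do not see how this goes through: the critical-arc condition $b\geq_a\opt(a)$ speaks only about $\opt$-partners, and the men's preferences between their two $G'$-neighbours (which is what $M$-stability would constrain) are not controlled. The paper instead argues structurally: every interior node of a good path is red (a blue interior node would have two non-$M$ edges in $G'$, contradicting that $M$ matches all degree-two nodes), and by~(a) the red nodes occur as a block of $x$-nodes followed by a block of $y$-nodes. A critical arc must leave a non-$y$ node and enter a non-$x$ node by~(b), so it can sit only at the single $x$/$y$ interface---hence at most one.

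A smaller point: in your sketch of~(a), the contradiction is a blocking pair for $\opt$, not for $M$. Once you know $\alpha_k$ is basic and $\beta'_0$ holds his proposal while having rejected the $2$-promoted $\alpha'_0$, you get $\alpha_k>_{\beta'_0}\alpha'_0=\opt(\beta'_0)$; combined with $\beta'_0>_{\alpha_k}\beta_k=\opt(\alpha_k)$ (which follows because $\alpha_k$ is basic and $\beta_k$ never rejected him), this blocks $\opt$.
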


%

%

%

%

%

\subsection*{Popularity}
In order to obtain a better charging scheme, 
we here introduce the concept of \tbdefined{popular women}. This concept plays a key role later in our analysis.

\begin{definition}
At a given time point, a proposal from $c\in A$ to $b\in B$ is \tbdefined{$a$-good} (for $a\in A$) if $c\geq_b a$, and at least one of the following is true
\begin{itemize}
\item $c$ is promoted at the given time point
\item $c$ was rejected by $b$ before the given time point
\item $c >_b a$.
\end{itemize}
\end{definition}

\begin{definition}
For $a\in A$ and $b\in B$, $b$ is \tbdefined{$a$-popular} if
at the end of the algorithm $b$ holds two $a$-good proposals.
\end{definition}

The following lemma is a straightforward observation.

\begin{lemma}
\label{lemma:popularity_monotonicity_preference}
Let $a,c\in A$ and $b\in B$ be such that $a\geq_b c$. If $b$ is $a$-popular then $b$ is $c$-popular as well.
\end{lemma}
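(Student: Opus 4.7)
The plan is to show directly from the definitions that every $a$-good proposal is also $c$-good; the lemma then follows immediately, since if $b$ holds two $a$-good proposals at the end of the algorithm, she holds two $c$-good proposals as well, and hence is $c$-popular.

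To see that any $a$-good proposal is $c$-good, I would fix a man $d\in A$ making such a proposal to $b$ at the end of the algorithm. By definition of $a$-good, $d\geq_b a$, and since the hypothesis of the lemma gives $a\geq_b c$, transitivity of the preference relation $\geq_b$ yields $d\geq_b c$, which is the first requirement for a $c$-good proposal. For the second requirement, I would do a case split on which of the three bullet conditions witnesses $a$-goodness of the proposal. In the first two cases (''$d$ is promoted'' or ''$d$ was rejected by $b$ before''), the condition is independent of the reference man and so immediately witnesses $c$-goodness. In the third case, $d>_b a$; combining with $a\geq_b c$ gives $d>_b c$, which is again the third bullet for $c$-goodness.

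No obstacle is really expected here: the lemma is a monotonicity observation that follows purely by unrolling the two definitions and applying transitivity. The only small subtlety is making sure that in the strict-preference case one uses $d>_b a$ together with $a\geq_b c$ to conclude $d>_b c$ (rather than the weaker $d\geq_b c$), so that the third bullet is actually recovered. I would keep the argument to three short lines and move on.
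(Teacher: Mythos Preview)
Your argument is correct and is exactly the straightforward unrolling of the definitions the paper has in mind; the paper in fact omits the proof entirely, calling the lemma ``a straightforward observation.'' Your handling of the only nontrivial case, deducing $d>_b c$ from $d>_b a$ and $a\geq_b c$, is fine.
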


The next lemmas will be useful later.

\begin{lemma}
\label{lemma:monotonicity_popularity_over_time}
For $a\in A$ and $b\in B$, the number of $a$-good
proposals which $b$ holds does not decrease over time.
Furthermore, if $b$ is not $a$-popular, $b$ rejected $a$ at most one time.
\end{lemma}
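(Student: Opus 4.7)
The plan is to prove both assertions by focusing on rejection events. The underlying observation is that \emph{being $a$-good is a monotone property of a proposal}: once a proposal to $b$ is $a$-good it stays so, because promotion of the proposer is irreversible, any past rejection by $b$ remains in the past, and the preference $c \geq_b a$ does not change over time. Hence the count of $a$-good proposals held by $b$ can only decrease at a rejection event, and rejection events occur only when $b$ holds two proposals $p,q$, receives a new proposal $r$, and discards a least desirable element of $\{p,q,r\}$.

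The key subclaim I will establish is: \emph{if one of the proposals in $\{p,q,r\}$ is $a$-good and another is not, then the $a$-good one is superior (in the sense of Definition~\ref{def:proposals_preferences}) to the non-$a$-good one.} This follows from a direct case split on which of the three conditions in the definition of $a$-good is witnessed (proposer strictly preferred to $a$; proposer promoted; proposer previously rejected by $b$), matched against the four bullets of Definition~\ref{def:proposals_preferences}; in each subcase the appropriate bullet directly applies. Given the subclaim, whenever a non-$a$-good proposal is present in $\{p,q,r\}$ alongside an $a$-good one, the $a$-good proposal is superior to the non-$a$-good one and so cannot be a least desirable element of the set. Therefore the pair of proposals retained after the rejection event contains at least as many $a$-good proposals as $\{p,q\}$ did, which proves the first assertion.

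For the second assertion I prove the contrapositive: if $b$ rejects a proposal of $a$ at least twice, then $b$ is $a$-popular. I focus on the second rejection of $a$ by $b$. Since the first rejection of $a$ strictly precedes this moment, the clause ``$b$ has already rejected a proposal of $a$'' in the fourth bullet of Definition~\ref{def:proposals_preferences} is active from this point on; moreover, any proposal held by $b$ that is itself from $a$ is automatically $a$-good at this point for the same reason. Combining these facts with the subclaim, both of the two proposals retained by $b$ immediately after the second rejection must be $a$-good. Monotonicity, established in the first part, then forces $b$ to end the algorithm holding two $a$-good proposals, so $b$ is $a$-popular.

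The main obstacle is the bookkeeping in the subclaim, where the three certifications of $a$-goodness must be matched against the four superiority bullets, with an additional split on whether each proposer is basic, $1$-promoted, or $2$-promoted. Extra care is needed for the second assertion to handle the edge case where one of the retained proposals is itself from $a$, but this is covered by the same case analysis once one uses that $b$ has already rejected $a$ at least once.
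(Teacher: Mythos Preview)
Your proposal is correct and follows essentially the same approach as the paper's proof. The only cosmetic difference is that you phrase the key step as a direct subclaim (``an $a$-good proposal is superior to a non-$a$-good one'') and derive non-rejection of $a$-good proposals from it, whereas the paper argues by contradiction (assuming $b$ rejects an $a$-good proposal while retaining a non-$a$-good one and deducing that the rejected proposal was not $a$-good after all); the underlying case analysis over Definition~\ref{def:proposals_preferences} is identical, and both proofs handle the second assertion by observing that at the second rejection the proposal from $a$ is itself $a$-good, forcing the two retained proposals to be $a$-good as well.
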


\begin{proof}
First, note that a proposal from $c\in A$ to $b\in B$ that is not $a$-good for some $a\in A$ at some point in time, can become $a$-good 
at some later point in time. For example, this can happen when $c$  becomes promoted. 
However, once a proposal becomes $a$-good, it will remain so until the end of the algorithm. 
To complete the proof of the first statement, it remains to show that at no point in time $b$ rejects an $a$-good proposal for a not $a$-good proposal. Let us assume the contrary, i.e. that at some point in time $b\in B$ rejects an $a$-good proposal from $c\in A$ for a not $a$-good proposal from $d\in A$. We have $c\geq_b a$ and $a \geq_b d$, since the proposal from $c$ is $a$-good and the proposal from $d$ is not. We also have $d\geq_b c$, since $b$ rejects the proposal of $c$ for the proposal from $d$. Thus, we have $a \indiff_b c \indiff_b d$.
Moreover, since the proposal from $d$ is not $a$-good, $d$ is not promoted and was not rejected by $b$ before the given time point. However, since $b$ rejects the proposal from $c$ for the proposal from $d$,  at this time point $c$ is also neither promoted nor was rejected by $b$ before. Hence, the proposal from $c$ is not $a$-good, contradiction.

For the second statement, note that if $b\in B$ rejects $a\in A$ once, then all subsequent proposals of $a$ to $b$ are $a$-good. Thus, due to the above argumentation, if at some time point $b$ rejects $a$ for the second time, then at that time point $b$ holds two $a$-good proposals. Hence, at the end of the algorithm $b$ holds two $a$-good proposals as well, i.e. $b$ is $a$-popular.
\end{proof}

\begin{lemma}
\label{lemma:least_prefered_unpopular}
Let $a\in A$ be basic, then the less preferred neighbor of $a$ in $G'$ is not $a$-popular.
\end{lemma}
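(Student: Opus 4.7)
The plan is to show that at the end of the algorithm the proposal of $a$ held by $b$ is not $a$-good; this immediately implies that $b$ is not $a$-popular, because $b$ holds at most two proposals in total, so at most one of them could be $a$-good.

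Let $p^i_a$ be a proposal of $a$ held by $b$ at the end of the algorithm (this exists because $b$ is a neighbor of $a$ in $G'$). Inspecting the definition of $a$-good, and using that $a$ is basic (hence not promoted) and $a \not>_b a$, the only way for $p^i_a$ to be $a$-good would be if $b$ had rejected $a$ at some earlier point. Hence the core step is to prove that $b$ has never rejected $a$ during the algorithm.

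I would argue by contradiction: assume $b$ rejected some proposal $p^j_a$ at some point. After that rejection, $p^j_a$ is passed on to the next woman in $a$'s preference list. Since $a$ is basic at the end, $a$ has never been promoted, so no proposal of $a$ can have made a full cycle through $a$'s list (such a cycle would force every woman in $N(a)$ to have rejected $p^j_a$, thereby promoting $a$ to $1$-promoted). It follows that the current location of $p^j_a$ at the end of the algorithm is some woman $b''$ strictly less preferred by $a$ than $b$. Since $b''$ holds $p^j_a$, $b''$ is itself a neighbor of $a$ in $G'$, contradicting the assumption that $b$ is the \emph{less preferred} neighbor of $a$.

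The main subtlety I anticipate is the boundary case in which $b$ is the last woman on $a$'s preference list: there the very act of $b$ rejecting $p^j_a$ already wraps $p^j_a$ back to $a$'s first choice and simultaneously completes a cycle through the list, which forces every woman in $N(a)$ to have rejected $p^j_a$ and thus promotes $a$, again a contradiction. This case should be treated explicitly, though it yields the same conclusion as the generic case.
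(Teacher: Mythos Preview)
Your proof is correct and follows the same approach as the paper: both show that the proposal of $a$ held by $b$ is not $a$-good by establishing that $b$ never rejected $a$. The paper simply asserts ``$b$ did not reject $a$ at any time point'' without justification, whereas you supply the detailed argument (tracking where a rejected proposal $p^j_a$ would land, using that a basic man's proposal cannot have cycled); your reasoning, including the boundary case, is sound.
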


\begin{proof}
Let $b\in B$ be the less preferred neighbor of $a$ in $G'$. It is sufficient to show that at the end of the algorithm, $b$ holds a proposal that is not $a$-good. Note, $b$ did not reject $a$ at any time point. Moreover, since $a$ is basic, any proposal of $a$ to $b$ is not $a$-good.
\end{proof}

\begin{definition}
For a woman $b\in B$, a critical arc $([a], [c])$ is \tbdefined{next to $b$} if $(a,b)$ is in $\opt$.
\end{definition}

\begin{lemma}
\label{lemma:critical_arc_five_path}
If $\alpha_0 - \beta_0 - \alpha_1 - \beta_1 - \alpha_2 - \beta_2$ is a $5$-augmenting path and $\beta_1$ is not $\alpha_1$-popular, then there is a critical arc next to $\beta_1$.
\end{lemma}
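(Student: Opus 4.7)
My plan begins by observing that, since $\opt(\beta_1) = \alpha_1$ in the given $5$-augmenting path, a critical arc next to $\beta_1$ must be of the form $([\alpha_1], [c])$ for some woman $c$. I would first verify the easy clauses: $[\alpha_1]$ is the $x$-node arising from the contraction of $(\alpha_1, \beta_0) \in M$, hence not a $y$-node; by Lemma~\ref{lemma:five_path_original_paper}, $\alpha_1$ is not $2$-promoted and $\beta_1 >_{\alpha_1} \beta_0$; and since $M(\alpha_1) = \beta_0$, the condition $(\alpha_1, c) \notin M$ holds automatically for any $c \ne \beta_0$. The task therefore reduces to exhibiting a neighbor $c$ of $\alpha_1$ in $G'$ with $c \ne \beta_0$ and $c \geq_{\alpha_1} \beta_1$: the resulting arc $([\alpha_1], [c])$ will then meet every clause of the definition and, by construction, be next to $\beta_1$.

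My strategy for producing such a $c$ is by contradiction, with Lemma~\ref{lemma:monotonicity_popularity_over_time} as the workhorse. Suppose every neighbor of $\alpha_1$ in $G'$ other than $\beta_0$ is strictly less preferred than $\beta_1$ by $\alpha_1$. Combined with $\beta_1 >_{\alpha_1} \beta_0$, this forces each of the two proposals $p^1_{\alpha_1}$ and $p^2_{\alpha_1}$ to end held by a woman strictly below $\beta_1$ on $\alpha_1$'s list. The key tracking observation is that each proposal starts at the top of $\alpha_1$'s list and advances only upon rejection (cycling around if it falls off the bottom), so any proposal whose final holder is strictly below $\beta_1$ must have been at $\beta_1$ at some earlier moment and been rejected by her in order to progress past her. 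Applying this to both proposals produces at least two distinct rejection events of $\alpha_1$ by $\beta_1$, and the second part of Lemma~\ref{lemma:monotonicity_popularity_over_time} then forces $\beta_1$ to be $\alpha_1$-popular, contradicting the hypothesis.

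I expect the main care to lie in two places. First, in the tracking argument, one should verify cleanly that a proposal whose final resting place is strictly below $\beta_1$ cannot simply ``skip'' $\beta_1$; this follows from the fact that proposals move exclusively in response to rejections and do so along $\alpha_1$'s preference order. Second, the two rejections of $\alpha_1$ by $\beta_1$ must be counted as genuinely distinct events: since $p^1_{\alpha_1}$ and $p^2_{\alpha_1}$ are different proposals and women process incoming proposals one at a time, a single rejection step can account for at most one of them, so the two counted events are indeed different. Once these points are in place, everything else is a direct verification of the critical-arc definition.
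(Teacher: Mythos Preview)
Your proposal is correct and follows essentially the same approach as the paper's own proof: assume every neighbor of $\alpha_1$ in $G'$ lies strictly below $\beta_1$ on $\alpha_1$'s list, deduce that $\beta_1$ rejected $\alpha_1$ at least twice, and invoke Lemma~\ref{lemma:monotonicity_popularity_over_time} for a contradiction; then verify that the resulting arc $([\alpha_1],[c])$ is critical using Lemma~\ref{lemma:five_path_original_paper}. The paper states the two-rejection step in one line, whereas you spell out the proposal-tracking argument in detail; note also that your explicit exclusion $c \ne \beta_0$ is redundant, since $c \geq_{\alpha_1} \beta_1 >_{\alpha_1} \beta_0$ already forces $c \ne \beta_0$.
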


\begin{proof}

Let us assume that for all $b\in B$ such that $(\alpha_1,b)$ is an arc in $G'$, we have $\beta_1>_{\alpha_1} b$. Then $\beta_1$ rejected $\alpha_1$ at least twice, contradicting Lemma~\ref{lemma:monotonicity_popularity_over_time}.

Thus, there exists $b\in B$ such that $(\alpha_1,b)$ is an arc in $G'$, we have $b\geq_{\alpha_1} \beta_1=\opt(\alpha_1)$. Let us prove that $([\alpha_1],[b])$ is a critical arc in $H$. By Lemma~\ref{lemma:five_path_original_paper}, $\alpha_1$ is not $2$-promoted and $\beta_1 >_{\alpha_1} \beta_0=M(\alpha_1)$. Clearly, $[\alpha_1]$ is not a $y$-node and $(\alpha_1, b)\not\in M$, proving that  $([\alpha_1],[b])$ is a critical arc in $H$. 
\end{proof}

\subsection*{Jumps}
The second concept that will be central for our improved analysis, is that of \tbdefined{jumps}. 
We start with introducing \tbdefined{matching jumps}. Each such jump defines a map from a woman, which is not isolated in $G'$ but is not matched by $M$, to a uniquely defined woman, which is matched by~$M$. 

\begin{definition}
Let $b\in B$ be  not matched by $M$, such that $b$ is not an isolated node in $G'$. Then there exists $c\in B$ such that $b$ and $c$ are end nodes of a maximal (not directed) path in $G'$. Let us define $\tbdefined{\mjump b}:= c$.
\end{definition}

\begin{remark}
Let $b\in B$ be  not matched by $M$, such that $b$ is not an isolated node in $G'$. Then $b$ has degree $1$ in the graph $G'$. Thus there is a unique maximal (not directed) path in $G'$ such that $b$ is an end node of this path. By the construction of $M$, the other end node of this path is some woman $c\in B$ who is matched by $M$. 
\end{remark}

Let us prove that for two different women $b$, $c\in B$ their matching jumps cannot result in the same woman, whenever matching jumps are well defined for $b$ and $c$.

\begin{lemma}
\label{lemma:mjump_injective}
Let $b,c\in B$ be not matched by $M$, such that $b$ and $c$ are not isolated nodes in $G'$.  If $\mjump b = \mjump c$, then $b = c$.
\end{lemma}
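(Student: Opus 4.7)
The plan is to exploit the fact that $G'$ has maximum degree $2$, so every connected component of $G'$ is either a simple path or a simple cycle. Since $b$ is unmatched by $M$ but not isolated in $G'$, the Remark preceding the lemma gives that $b$ has degree exactly $1$ in $G'$; therefore the connected component of $b$ in $G'$ is a simple path with $b$ as one of its two endpoints, and $\mjump b$ is, by definition, the other endpoint of that path. The same reasoning applies to $c$.

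Now suppose $\mjump b = \mjump c =: d$. Since each vertex of $G'$ lies in exactly one connected component, the presence of $d$ in both the component of $b$ and the component of $c$ forces these two components to coincide. Call this common component $P$; from the previous paragraph $P$ is a simple path with $d$ as one endpoint. Both $b$ and $c$ play the role of the opposite endpoint of $P$, and a simple path has exactly two endpoints, so $b = c$ follows immediately.

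I do not anticipate any real obstacle: the argument is a short structural observation about graphs of maximum degree $2$. The only subtlety is to invoke the Remark cleanly so that $b$ and $c$ each have degree exactly $1$ in $G'$, which both forces their components to be paths (rather than cycles) and makes the notion of the ``other endpoint'' unambiguous. After that, injectivity is just the statement that a path has two distinct endpoints.
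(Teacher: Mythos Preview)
Your proposal is correct and follows essentially the same approach as the paper's proof: both arguments use that $b$ and $\mjump b$ are the two endpoints of a single maximal path in $G'$, and that distinct maximal paths in a graph of maximum degree~$2$ are disjoint, so a shared endpoint forces the paths---and hence the other endpoints---to coincide. Your version is slightly more explicit in invoking the Remark to pin down that $b$ has degree~$1$, but the underlying idea is identical.
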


\begin{proof}
Note that $b$ and $\mjump b$ are two end nodes of the same maximal (not directed) path in $G'$. Similarly, $c$ and $\mjump c$ are two end nodes of the same maximal (not directed) path  in $G'$. Since maximal (not directed) paths in the graph $G'$ are disjoint and $\mjump b = \mjump c$, we have $b = c$.
\end{proof}

\begin{lemma}
Let $b\in B$ be not matched by $M$, such that $b$ is not an isolated node in $G'$. Then, the node $\mjump b$ is either a blue woman in the graph $H$ or is in a $y$-node.
\end{lemma}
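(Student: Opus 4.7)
The plan is to argue by contradiction: assume that $[c]$ is an $x$-node in $H$, where $c := \mjump b$. By the construction of $H$, this means $c = \beta_0$ in some augmenting path $\alpha_0 - \beta_0 - \alpha_1 - \ldots - \alpha_j - \beta_j$ of $M \oplus \opt$, so $\alpha_0 \in A$ is unmatched by $M$, $\opt(\alpha_0) = c$, and $M(c) = \alpha_1$.

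I would first extract structural facts about the maximal path in $G'$ from $b$ to $c$. Writing it as $b = v_0 - v_1 - \cdots - v_k = c$, bipartiteness with $b,c \in B$ makes $k$ even, and since $b$ is unmatched while $c$ is matched by $M$ (per the remark following the definition of $\mjump$), the alternation of $M$-edges on this path gives $k \geq 2$ and $v_{k-1} = M(c) = \alpha_1$. Thus $\alpha_1$ is interior to the path and has degree $2$ in $G'$, while $c$ is an endpoint of the path and therefore has degree $1$ in $G'$ with unique neighbor $\alpha_1$.

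The crux is then to exploit the proposal dynamics of the algorithm. Since $\alpha_0$ is unmatched by $M$, he must have reached the $2$-promoted status and been rejected by every woman in $N(\alpha_0)$; as $(\alpha_0, c) \in \opt \subseteq E$ gives $c \in N(\alpha_0)$, the woman $c$ received (and eventually rejected) at least one proposal of $\alpha_0$. She also received at least one proposal from $\alpha_1$, so $c$ received at least two proposals in total. The key observation I would invoke is a monotonicity property of the accept/reject rule: once a woman accumulates two proposals, her count stays at exactly two for the remainder of the algorithm, since rejections only occur when a third arrives and bring her count back down to two. Consequently $c$ ends holding exactly two proposals; as $\alpha_1$ is her only $G'$-neighbor, both of these must be $\alpha_1$'s proposals $p^1_{\alpha_1}$ and $p^2_{\alpha_1}$. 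This forces $\alpha_1$ to have degree $1$ in $G'$, contradicting the fact that $\alpha_1 = v_{k-1}$ is interior and so has degree $2$. The main subtlety is justifying the monotonicity of a woman's held-proposal count carefully from the rule — this is what pins both slots of $\alpha_1$ at $c$; the rest of the argument is routine bookkeeping about degrees along the maximal path.
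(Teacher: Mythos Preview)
Your argument is correct and follows essentially the same route as the paper's proof. Both proofs reduce to showing that a woman $c$ in an $x$-node must hold two proposals at the end (because $\opt(c)=\alpha_0$ is unmatched by $M$, hence $2$-promoted and rejected by $c$), and then derive a contradiction with $c$ being a degree-$1$ endpoint of the maximal $G'$-path from $b$. The paper phrases the final step as ``either $c$ has degree $2$ or $(c,M(c))$ is its own component,'' whereas you unpack the second alternative explicitly by showing both of $\alpha_1$'s proposals are pinned at $c$, forcing $\deg_{G'}(\alpha_1)=1$ against the fact that $\alpha_1=v_{k-1}$ is interior; these are the same contradiction.
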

\begin{proof}
It is enough to show that $\mjump b$ is not in an $x$-node. Note that if a woman $c\in B$ is in an $x$-node, then this woman holds two proposals at the end of the algorithm. Thus, $c$ either has degree $2$ in the graph $G'$ or $(c, M(c))$ forms a connected component of $G'$ (that is,
both two proposals held by $c$ are from $M(c)$). However, by the definition $\mjump b$  is an end node of a maximal (not directed) path in $G'$, where one of the end nodes is a woman not matched by $M$, contradiction.
\end{proof}

As follows from the above lemma, a matching jump can result in a blue node or in a $y$-node. However, for reasons to become clear later, we would like jumps to map to a blue woman. Sometimes a matching jump indeed ends in a $y$-node.   In such cases, the next type of jumps, \tbdefined{path jumps},  helps to do a further jump to a uniquely defined blue woman. 

\begin{definition}
Let $c$ be a $y$-node or an end node of a critical arc in $H$, let $b$ be a blue woman such that there is a directed path in $H$ from $c$ to $b$. Then $\tbdefined{\pathjump c}:= b$.
\end{definition}

\begin{remark}
Note that if a node $c$ in $H$ is in a critical arc, then $\pathjump c$ is uniquely defined, and $\pathjump c$ is the end node of the good path containing $c$.

Moreover, $\pathjump c$ is also uniquely defined for $[c]$, whenever $[c]$ is a $y$-node with an outgoing arc. If $c$ has no outgoing arc, then there is no blue woman $b$ such that there is a directed path in $H$ from $c$ to $b$.
\end{remark}

Finally, we introduce a map \tbdefined{matching jump with exception}, which combines the previous two jumps.

\begin{definition}
Let $b\in B$ be not matched by $M$, such that $b$ is not an isolated node in $G'$. 
If $\mjump b$ is a blue woman, let 
$$\tbdefined{\mjumpe b} := \mjump b\,.$$
If $\mjump b$ is in a $y$-node, let 
$$\tbdefined{\mjumpe b} :=\pathjump{[\mjump b]}\,.$$
\end{definition}

\begin{remark}
Let $b\in B$ be not matched by $M$, such that $b$ is not an isolated node in $G'$.  If $\mjump b$ is in a $y$-node, then this $y$-node $[\mjump b]$ has an outgoing arc.
\end{remark}

\begin{lemma}
\label{lemma:mjumpe_blue}
Let $b\in B$ be not matched by $M$, such that $b$ is not an isolated node in $G'$. Then $\mjumpe b$ is a blue woman.
\end{lemma}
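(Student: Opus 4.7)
The plan is to case-split using the preceding lemma, which says that $\mjump b$ is either a blue woman or lies in a $y$-node. If $\mjump b$ is blue, then $\mjumpe b := \mjump b$ is blue and we are done. If $\mjump b$ is in a $y$-node, then $\mjumpe b := \pathjump{[\mjump b]}$, which by the remark following the definition of $\pathjump$ is a uniquely determined blue woman \emph{provided} the $y$-node $[\mjump b]$ has an outgoing arc in $H$. The real work is therefore to exhibit this outgoing arc.

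To produce the arc, I would examine the maximal undirected path $P$ in $G'$ with endpoints $b$ and $\mjump b$, whose existence is built into the definition of $\mjump$. Since $G'$ is bipartite with both endpoints in $B$, the path has the form $b = b_0 - a_1 - b_1 - \cdots - a_n - b_n = \mjump b$ for some $n \geq 1$. An easy induction along $P$, using that every interior vertex has degree $2$ in $G'$ and is therefore $M$-matched while $b$ is $M$-unmatched, shows $(a_i, b_i) \in M$ for every $i$. In particular $M(\mjump b) = a_n$, so the $y$-node $[\mjump b]$ arises from contracting the $M$-edge $(a_n, \mjump b) = (\alpha_k, \beta_{k-1})$ of some augmenting path. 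Since $a_n$ is interior to $P$, it has a second neighbor $b_{n-1}$ in $G'$ (with the convention $b_0 := b$) distinct from $b_n$, and the edge $(a_n, b_{n-1})$ is not contracted. It therefore provides an outgoing arc from $[\mjump b] = [a_n]$ to $[b_{n-1}]$ in $H$, as required.

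The main obstacle is exactly this outgoing-arc argument; once it is in hand, the conclusion follows immediately from the quoted remark on $\pathjump$. A minor point I would verify in passing is the $n=1$ corner case, where $b_{n-1}$ coincides with $b$ itself: here one should check that $[b] \neq [\mjump b]$, which holds because $b$ is $M$-unmatched and hence cannot lie in any contracted $M$-edge, so $[b]$ is a genuine blue woman.
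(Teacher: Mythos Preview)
Your proof is correct and follows the same two-case split as the paper. The only difference is that you supply an explicit argument for the outgoing arc from the $y$-node $[\mjump b]$, whereas the paper records this fact as an unproven Remark just before the lemma; once that is granted, both proofs conclude immediately from the definition of $\pathjump$.
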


\begin{proof}
If $\mjump b$ is a blue woman, then $\mjumpe b = \mjump b$ and hence $\mjumpe b$ is a blue woman as well.
Otherwise $\mjump b$ is in a $y$-node, then $$\mjumpe b = \pathjump{[\mjump b]}$$ and $\pathjump{[\mjump b]}$ is a blue
woman by definition.
\end{proof}

\begin{lemma}
\label{lemma:mjumpe_degree}
Let $b\in B$ be a woman such that $b$ is not matched by $M$ and $b$ is in a good path (so $b$ is not isolated in $G'$). 
If $\mjump b$ is in a $y$-node, then $\mjumpe b$ has degree $2$ in $G'$.
\end{lemma}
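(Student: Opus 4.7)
The plan is to trace the unique directed path in $H$ from $[c]$ (where $c := \mjump b$) to $w := \mjumpe b$, lift it back to $G'$, and show that the hypothesis that $b$ lies in a good path prevents the lift from terminating at $b$ itself.

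First, I would describe the maximal (undirected) path in $G'$ containing $b$ and $c$ as $b = v_0, v_1, v_2, \ldots, v_\ell = c$. Since $b$ is unmatched and every interior node of this path has degree $2$ in $G'$, the edges alternate between non-$M$ and $M$ starting from $b$; this forces $\ell$ to be even, the $v_{2j}$'s to be women, the $v_{2j-1}$'s to be men, and the $M$-edges on the path to be exactly $(v_{2j-1}, v_{2j})$ for $j = 1, \ldots, \ell/2$. In particular, the $y$-node $[c]$ is obtained by contracting $(v_{\ell-1}, v_\ell)$.

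Next, I would trace the directed path in $H$ from $[c]$. Intermediate nodes of any directed path in $H$ must be red, since they need both an outgoing arc (originating at a man) and an incoming arc (arriving at a woman). Because $v_{\ell-1}$'s only $G'$-neighbor besides $c$ is $v_{\ell-2}$, the unique outgoing arc from $[c]$ goes to $[v_{\ell-2}]$. Inductively, as long as $v_{2j}$ lies in a red node, $[v_{2j}]$ has a unique outgoing arc to $[v_{2j-2}]$, determined by the fact that the contracted man $v_{2j-1}$ has only one $G'$-neighbor other than $v_{2j}$, namely $v_{2j-2}$. Hence the lifted path walks backward along the maximal path from $c$ towards $b$ and terminates at the first blue $v_{2j}$. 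Since every $v_{2j}$ with $j \geq 1$ has degree $2$ in $G'$ (as an interior node of the maximal path), the only way for $w$ to have degree $1$ is $w = v_0 = b$, which forces every $v_{2j}$ with $1 \leq j \leq \ell/2$ to be red.

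The hard part is deriving a contradiction in the case $w = b$. Here I would use the hypothesis that $b$ lies on a good path, say $\alpha'_0 \to u_1 \to \cdots \to u_m \to b$, with $\alpha'_0$ a blue man and each $u_i$ a red intermediate node (where $m \geq 0$). Because $v_1$ is the unique $G'$-neighbor of $b$, chasing the arcs backwards from $b$ shows that $u_i$ is the red node contracting the $M$-edge $(v_{2(m-i)+1}, v_{2(m-i+1)})$, and $\alpha'_0 = v_{2m+1}$. Since $\alpha'_0$ is a man distinct from $c$, we have $2m+1 \leq \ell - 1$, so $v_{2m+2}$ exists on the maximal path. By the case assumption $v_{2m+2}$ is red, and since $M(v_{2m+2}) = v_{2m+1}$, the red node containing $v_{2m+2}$ also contains $v_{2m+1}$, forcing $\alpha'_0 = v_{2m+1}$ to be red---contradicting the fact that the first node of a good path is blue. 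This rules out $w = b$, so $w$ must coincide with some $v_{2j}$ with $j \geq 1$ and therefore has degree $2$ in $G'$.
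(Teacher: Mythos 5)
Your proof is correct and takes essentially the same route as the paper's: the paper also observes that $\pathjump{[\mjump b]}$ lies on the maximal $G'$-path containing $b$ and $\mjump b$ (so degree $1$ would force $\mjumpe b = b$) and then derives a contradiction by tracing the unique incoming arcs backwards from $b$ along the contracted $M$-edges, the only cosmetic difference being the endgame (the paper concludes the good path would contain the in-arc-less $y$-node $[\mjump b]$, while you conclude its initial blue man $\alpha'_0=v_{2m+1}$ would in fact be red). The one point you should state explicitly is the boundary case of the backward chase: if it reaches $[\mjump b]$ at a non-initial node of the good path (so that $v_{2m+1}$ need not exist), the contradiction is immediate because $[\mjump b]$ has no incoming arc, its woman having degree $1$ in $G'$ with her only edge contracted; also, your parenthetical ``every $v_{2j}$ with $j\geq 1$ has degree $2$'' should exclude $j=\ell/2$, which is harmless since the trace never ends at $c$.
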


\begin{proof}
Suppose  $\mjumpe b$ does not have degree $2$ in $G'$. Then the degree of $\mjumpe b$ in $G'$ equals $1$, since all nodes in $G'$ have degree at most $2$ and the node $\mjumpe b$ is clearly not isolated.

Since $b$ is not matched by $M$, the degree of $b$ in $G'$ is at most $1$. Moreover, the degree of $b$ in $G'$ equals $1$, because $b$ is contained in a good path in the graph $H$. Hence, there exists a unique maximal (not directed) path in $G'$ with an end node $b$. The other end node of this path is $\mjump b$. Note that $\pathjump{[\mjump b]}$ is a node on this path. Since $\pathjump{[\mjump b]}=\mjumpe b$ has degree $1$ in the graph $G'$, the node $\mjumpe b$ is also an end node of this path. Thus, we have that either $\mjump b=\mjumpe b$ or $b=\mjumpe b$. By definition  $\mjumpe b$ is a blue node and by the statement of the lemma $\mjump b$ is in a $y$-node, implying $b=\mjumpe b$.

By definition, there exists a directed path in the graph $H$ from the node $[\mjump b]$ to the node $\pathjump{\mjump b}$. Thus, on one side $[\mjump b]$ is in the good path in $H$ containing the node $b$. On the other side, by definition $[\mjump b]$ has no incoming arc in the graph~$H$. However, no good path contains a $y$-node with no incoming arcs, contradiction. 
\end{proof}

\begin{corollary}
\label{corollary:mjumpe_matched}

Let $b\in B$ be a woman such that $b$ is not matched by $M$ and $b$ is in a good path (so $b$ is not isolated in $G'$).
Then $\mjumpe b$ is matched by $M$.
\end{corollary}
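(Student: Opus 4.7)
The plan is to split on the two cases in the definition of $\mjumpe b$ and invoke existing results in each case.

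First, I would consider the case where $\mjump b$ is already a blue woman. Then by definition $\mjumpe b = \mjump b$. The remark following the definition of matching jump already observes that for an unmatched, non-isolated $b \in B$, the node $\mjump b$ is the other end of a maximal (undirected) path in $G'$, and this other end is a woman matched by $M$. Hence $\mjumpe b$ is matched by $M$ in this case.

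Second, I would consider the case where $\mjump b$ is in a $y$-node. Here I would directly apply Lemma~\ref{lemma:mjumpe_degree}, which tells us that $\mjumpe b$ has degree $2$ in $G'$. Then I would recall the construction of the output matching described earlier: $M$ is chosen so that all degree-$2$ nodes of $G'$ are matched. Therefore $\mjumpe b$ is matched by $M$.

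Both cases are essentially immediate given what has already been proved; there is no real obstacle here, since the work was already done in Lemma~\ref{lemma:mjumpe_degree} and in the remark about $\mjump b$ being an end node of a maximal path whose opposite end is matched by $M$. The only thing to be careful about is to use the hypothesis ``$b$ is in a good path'' exactly where Lemma~\ref{lemma:mjumpe_degree} needs it, so that the second case actually falls under that lemma's hypothesis; the first case does not need this extra hypothesis at all and follows just from the remark.
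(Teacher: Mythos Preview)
Your proposal is correct and follows essentially the same route as the paper's proof: split on whether $\mjump b$ is a blue woman (then use the remark that $\mjump b$ is matched by $M$) or in a $y$-node (then apply Lemma~\ref{lemma:mjumpe_degree} and the fact that $M$ matches all degree-$2$ nodes of $G'$). Your observation about where the ``$b$ is in a good path'' hypothesis is actually needed is also accurate.
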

\begin{proof}
If $\mjumpe b=\mjump b$, then by definition $\mjumpe b$ is matched by $M$. If $\mjump b$ is in a $y$-node, then by Lemma~\ref{lemma:mjumpe_degree} $\mjumpe b$ has degree $2$ in the graph $G'$, and hence by definition of $M$ the node $b$ is matched by $M$.
\end{proof}

\begin{lemma}
\label{lemma:mjumpe_injective}
Let $b,c\in B$ be women such that $b$, $c$ are not matched by $M$ and both $b$ and $c$ are in some good path (so neither $b$ or $c$ is isolated in $G'$). If $\mjumpe b = \mjumpe c$, then $b = c$.
\end{lemma}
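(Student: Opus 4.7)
The plan is to fix $b, c$ satisfying $\mjumpe b = \mjumpe c =: w$ and deduce $b = c$. I split into three cases depending on whether each of $\mjump b$ and $\mjump c$ is a blue woman or lies inside a $y$-node. Two cases are quick. If both $\mjump b$ and $\mjump c$ are blue, then $\mjumpe b = \mjump b$ and $\mjumpe c = \mjump c$, so $\mjump b = \mjump c$ and Lemma~\ref{lemma:mjump_injective} gives $b = c$. If exactly one of them (say $\mjump b$) is in a $y$-node, then Lemma~\ref{lemma:mjumpe_degree} forces $w$ to have degree $2$ in $G'$; but $w = \mjump c$ is an end node of a maximal path in $G'$ and is matched by $M$, hence has degree $1$, a contradiction.

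The remaining case, where both $\mjump b$ and $\mjump c$ lie in $y$-nodes, is the main obstacle. My strategy is to prove $[\mjump b] = [\mjump c]$: together with the uniqueness of the woman inside a $y$-node, this gives $\mjump b = \mjump c$, and Lemma~\ref{lemma:mjump_injective} then closes the case. The key structural facts I will use are that every $x$- or $y$-node contains exactly one man of $G'$ whose only $M$-edge has been contracted inside the node, so red nodes have in-degree and out-degree at most $1$ in $H$; blue men have in-degree $0$; and blue women have out-degree $0$. Consequently, the directed paths $P_b$ from $[\mjump b]$ to $w$ and $P_c$ from $[\mjump c]$ to $w$ are uniquely determined.

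A useful intermediate observation is that $[\mjump b]$ has no incoming arc in $H$: the woman $\mjump b$ inside the $y$-node is an end node of a maximal path in $G'$, hence has degree $1$, and her unique $G'$-neighbor is the $M$-partner already sitting inside the $y$-node. The same holds for $[\mjump c]$, so neither $y$-node can be an interior node of the other's path. Let $v^\ast$ be the first node (in the order of $P_b$) common to $P_b$ and $P_c$. Then $v^\ast$ receives two distinct predecessor arcs, one from each of the disjoint initial segments; the in-degree bounds above rule out red nodes and blue men, so $v^\ast$ is a blue woman, and since blue women have out-degree $0$, $v^\ast$ cannot be interior to either path, forcing $v^\ast = w$.

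The delicate final step, which I expect to be the crux, is ruling out that $P_b$ and $P_c$ enter $w$ through different arcs. Tracing back one step along each path, the predecessor of $w$ must be a red node (either the starting $y$-node itself or an intermediate $x$- or $y$-node, since a predecessor of $w$ requires both an incoming and an outgoing arc, which excludes blue nodes), and any arc leaving a red node is non-$M$ because the man's $M$-partner has been contracted inside. On the other hand, Lemma~\ref{lemma:mjumpe_degree} and Corollary~\ref{corollary:mjumpe_matched} give that $w$ has degree $2$ in $G'$ and is matched by $M$, so exactly one of the two $G'$-edges at $w$ is non-$M$, and hence $w$ has exactly one non-$M$ incoming arc in $H$. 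This forces the predecessors of $w$ on $P_b$ and $P_c$ to coincide, producing a common node of the two paths strictly before $w$ and contradicting $v^\ast = w$. Hence $[\mjump b] = [\mjump c]$, completing the proof.
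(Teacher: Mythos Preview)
Your proof is correct and takes essentially the same approach as the paper. The paper cases on the $G'$-degree of $\mjumpe b$ (equivalent to your blue/$y$-node split on $\mjump b$, $\mjump c$) and, in the main case, traces back from $w$ using that $y$-nodes have in-degree at most one and that $w$ has at most one incoming arc from a $y$-node---which is exactly your ``unique non-$M$ incoming arc at $w$'' observation, just reached without the first-common-node detour.
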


\begin{proof}
Since all nodes in $G'$ have degree at most $2$ and $\mjumpe b$ is not isolated in $G'$, the degree of $\mjumpe b$ in the graph $G'$ is either $1$ or $2$.

Let us assume that the degree of $\mjumpe b$ in the graph $G'$ is $1$. Then by Lemma~\ref{lemma:mjumpe_degree}, neither $\mjump b$ nor $\mjump c$ is a $y$-node. Thus,
$$
\mjump b=\mjumpe b=\mjumpe c=\mjump c\,,
$$
that by Lemma~\ref{lemma:mjump_injective} implies $b=c$.

Now, let us assume that the degree of $\mjumpe b$ in the graph $G'$ is $2$. Since both $\mjump b$ and $\mjump c$ have degree $1$ in $G'$, we have
$$
\mjump b\neq\mjumpe b\qquad \mjump c\neq\mjumpe c\,,
$$
implying that both $\mjump b$ and $\mjump c$ are in $y$-nodes with no incoming arcs in $H$. Thus,  from both $[\mjump b]$ and $[\mjump c]$ there is a directed path to the node $\mjumpe b=\mjumpe c$ in the graph $H$, where all nodes in the path except $\mjumpe b$ are $y$-nodes. On the other hand, $\mjumpe b$ is a blue woman in $H$, and hence there is at most one arc from a $y$-node to $\mjumpe b$. Moreover, for every $y$-node there is at most one incoming arc, implying $\mjump b=\mjump c$. So in this case, we also obtain $b=c$ by Lemma~\ref{lemma:mjump_injective}.

\end{proof}

The next lemma is used to guarantee that at certain stages of our charging scheme the same nodes do not get charged multiple times.

\begin{lemma}
\label{lemma:pathjump_injective}
Let $a,c\in A$ and $b,d\in B$ be such that the arcs $([a], [b])$ and $([c], [d])$ are critical. If $\pathjump{[a]} =
\pathjump{[c]}$, then we have $a = c$.
\end{lemma}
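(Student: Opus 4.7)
The plan is to show that the two critical arcs must lie on the same good path and then coincide as arcs; the equality $a=c$ then follows immediately. The two main ingredients are the uniqueness property of $\pathjump{}$ at endpoints of critical arcs (stated in the remark following its definition) together with parts (c), (d), (e) of Lemma~\ref{lemma:combined}.

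First I would apply Lemma~\ref{lemma:combined}(c) to obtain good paths $P_1$ and $P_2$ in $H$ containing the critical arcs $([a],[b])$ and $([c],[d])$, respectively. By the definition of a good path, each of $P_1$ and $P_2$ ends at a blue woman; call these $w_1$ and $w_2$. Walking forward along $P_1$ from the tail $[a]$ of the critical arc yields a directed path in $H$ from $[a]$ to $w_1$, exhibiting $w_1$ as a blue woman reachable from $[a]$; by the uniqueness asserted in the remark following the definition of $\pathjump{}$, this forces $\pathjump{[a]}=w_1$. Analogously $\pathjump{[c]}=w_2$.

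Under the hypothesis $\pathjump{[a]} = \pathjump{[c]}$ we thus get $w_1 = w_2$, so $P_1$ and $P_2$ share at least one node. Lemma~\ref{lemma:combined}(e), which says distinct good paths are node-disjoint, then forces $P_1 = P_2$. Now both critical arcs $([a],[b])$ and $([c],[d])$ lie on the single good path $P_1$, but Lemma~\ref{lemma:combined}(d) guarantees that $P_1$ contains at most one critical arc. Hence $([a],[b]) = ([c],[d])$ as arcs of $H$, and in particular $a = c$.

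The only subtle point is that $[a]$ might be an $x$-node rather than a blue man, so one needs to make sure $\pathjump{[a]}$ is meaningfully defined in the first place; the definition only requires $[a]$ to be an endpoint of a critical arc, and Lemma~\ref{lemma:combined}(b) ensures this endpoint is not a $y$-node, so we stay within the regime where the remark applies. Once this is noted the argument is a direct and clean combination of parts (c)--(e) of Lemma~\ref{lemma:combined} with the uniqueness of $\pathjump{}$.
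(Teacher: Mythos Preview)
Your proof is correct and follows essentially the same route as the paper's: use Lemma~\ref{lemma:combined}(c) to place each critical arc on a good path, identify $\pathjump{[a]}$ and $\pathjump{[c]}$ with the blue-woman endpoints of those paths, invoke Lemma~\ref{lemma:combined}(e) to conclude the paths coincide, and then Lemma~\ref{lemma:combined}(d) to conclude the critical arcs coincide. You are a bit more explicit than the paper in justifying why $\pathjump{[a]}$ equals the terminal blue woman of the good path, and your closing remark about $[a]$ possibly being an $x$-node is a reasonable sanity check, but neither changes the structure of the argument.
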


\begin{proof}
By Lemma~\ref{lemma:critical_arc_is_in_some_good_path}, both $([a],[b])$ and $([c], [d])$ are contained in some good paths. Moreover, $\pathjump{[a]}$ and $\pathjump{[c]}$ are the end nodes of those good paths. Since by Lemma~\ref{lemma:good_paths_are_node_disjoint} good paths are node-disjoint, $([a],[b])$ and $([c], [d])$ are in the same good path. Moreover, since by Lemma~\ref{lemma:good_path_has_at_most_one_critical_arc} every
good path has at most one critical arc, we have $([a],[b])=([c], [d])$, in particular $a = c$.
\end{proof}

\section{Charging scheme}
\label{sec:charging}

To show the tight approximation guarantee of the algorithm in~\cite{HuangKavitha2015}, we use the following charging scheme. The charging scheme is conducted in five stages, which are described below.

\begin{enumerate}
\item \label{step:charging_1} Each man who is in a $y$-node from a $5$-augmenting path receives a charge of $1$.
\item \label{step:charging_2} Each man with a nonzero charge passes on his charge to his less preferred neighbor in~$G'$.
\item \label{step:charging_3}
Every woman $b\in B$ with a nonzero charge and who is not matched by $M$, passes all her charge  to $\mjump b$. (Note that if a woman has a nonzero charge at the beginning of this stage, then she is not isolated in $G'$, and hence this step is well defined.)
\item \label{step:charging_4}
Each woman $b\in B$, such that there is a critical arc $(a, c)$ next to $b$, passes on all her charge to $\pathjump a$. (Note that $\pathjump a$ equals $\pathjump{c}$.)
\item \label{step:charging_5}
Each  woman $b\in B$, who received a nonzero charge in stage \ref{step:charging_4} and who is not matched by $M$, passes on all her charge to $\mjumpe b$.
(Note that if a woman received charge at stage \ref{step:charging_4}, then she is not isolated in $G'$, and hence this stage is well defined.)
\end{enumerate}

We call the charge a woman received at stages \ref{step:charging_2} and \ref{step:charging_3} \tbdefined{unpopularity-charge} and call the charge a woman received at stages \ref{step:charging_4}
and \ref{step:charging_5} \tbdefined{path-charge}.
If a woman receives charge at stages \ref{step:charging_2} and \ref{step:charging_3}, but passes on that
charge at stage \ref{step:charging_4}, we say that this woman has zero unpopularity-charge.

\begin{lemma}
\label{lemma:only_matched_women_have_charge}
At the end of the charging scheme, only women matched by $M$ have a nonzero charge.
\end{lemma}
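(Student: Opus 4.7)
My plan is to fix an arbitrary $b \in B$ that is not matched by $M$ and track her charge through the five stages, showing that it ends at zero. The backbone of the argument is that stage $3$ expels any unpopularity-charge from unmatched women via $\mjump{\cdot}$, stage $5$ expels any stage-$4$ path-charge from unmatched women via $\mjumpe{\cdot}$, and unmatched women cannot be recipients at stages $3$ or $5$ because both operations always land on women matched by $M$.

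Carrying this out, I first handle stages $1$ through $3$. Stage $1$ affects only men, and stage $2$ can deposit unpopularity-charge on $b$ only by picking her as the less preferred neighbor in $G'$ of some charged man. At stage $3$, $b$ cannot receive any charge, because every stage-$3$ recipient has the form $\mjump{b'}$ for an unmatched, non-isolated $b'\in B$, and by the remark following the definition of $\mjump{\cdot}$ such a node is matched by $M$, while $b$ is not. Hence the charge $b$ holds at the start of stage $3$ is exactly what she got at stage $2$; if it is positive, then $b$ is adjacent in $G'$ to some man and therefore not isolated, so the stage-$3$ rule applies and $b$ forwards all of her charge to $\mjump b$. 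Either way, $b$ ends stage $3$ with zero charge.

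For stages $4$ and $5$, note that at stage $4$ the only way $b$ can receive charge is by being $\pathjump a$ for some critical arc $([a],[c])$, which places $b$ at the endpoint of (and hence inside) the good path containing that critical arc. Since $b$'s pre-stage-$4$ charge is zero, her post-stage-$4$ charge equals the path-charge she just received. At stage $5$, if this charge is positive, then $b$ (unmatched, in a good path) sends all of it to $\mjumpe b$, which by Corollary~\ref{corollary:mjumpe_matched} is matched by $M$ and hence distinct from $b$. Conversely, $b$ cannot receive at stage $5$ either, since every stage-$5$ transfer is aimed at some $\mjumpe{b'}$ with $b'$ in a good path and unmatched, and by the same corollary the target is matched by $M$. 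Thus $b$ finishes the scheme with zero charge. The only subtlety is the within-stage ordering, which I resolve by reading each stage's transfers as simultaneous on the pre-stage state; the edge case $b$ isolated in $G'$ is immediate because then $b$ is neither a less preferred neighbor of any man nor an endpoint of any good path, so she never receives charge at any stage.
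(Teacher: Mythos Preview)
Your proof is correct and follows essentially the same approach as the paper: fix an unmatched woman, walk through the five stages, and observe that at stages~2 and~4 any charge she acquires is forwarded at stages~3 and~5 respectively, while she can never be a recipient at stages~3 or~5 because the targets $\mjump{b'}$ and $\mjumpe{b'}$ are always matched (the latter via Corollary~\ref{corollary:mjumpe_matched}, using that any stage-5 sender lies in a good path). Your added remarks on simultaneity and the isolated-node edge case are fine clarifications but not new ideas beyond the paper's argument.
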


\begin{proof}
From the charging scheme, it is clear that only women may have a nonzero charge at the end of the charging scheme.

Let us discuss how a woman $c\in B$, who  is not matched by $M$, may receive a nonzero charge during the charging scheme. First, $c$ cannot receive a nonzero charge in stage~\ref{step:charging_1}. Second, if $c$ receives a charge at stage~\ref{step:charging_2}, then $c$ passes on this charge at stage~\ref{step:charging_3}. Third, note that for every $b\in B$ which is not matched by $M$ and is not isolated in $G'$, $\mjump b$ is matched by $M$. Hence, $c$ cannot receive any charge at stage~\ref{step:charging_3}. Forth, if $c$ receives a charge at stage~\ref{step:charging_4}, then $c$ passes this charge on at stage~\ref{step:charging_5}. Fifth, by Corollary~\ref{corollary:mjumpe_matched} and Lemma~\ref{lemma:critical_arc_is_in_some_good_path} no woman, which is not matched by $M$, gets a nonzero charge at stage~\ref{step:charging_5}. 
\end{proof}

\begin{lemma}
\label{lemma:only_blue_women_have_path_charge}
At the end of the charging scheme, only blue women have a nonzero path-charge.
\end{lemma}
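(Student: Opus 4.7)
The plan is to simply trace through the two stages at which path-charge is created and observe that in each case the transferred charge lands on a blue woman. Since stages \ref{step:charging_4} and \ref{step:charging_5} are the final two stages of the charging scheme, no additional movement of path-charge occurs afterwards, so the woman who holds a positive path-charge at the end must be the target of a path-charge transfer.

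First I would consider stage \ref{step:charging_4}: a woman $b\in B$ with a critical arc $(a,c)$ next to her passes her entire charge to $\pathjump{[a]}$. By the very definition of $\pathjump$, the node $\pathjump{[a]}$ is a blue woman in $H$, so this transfer deposits charge only on blue women. Next I would handle stage \ref{step:charging_5}: a woman $b$ who received charge in stage \ref{step:charging_4} and is not matched by $M$ passes this charge to $\mjumpe b$. The existence of this latter node is already justified in the description of the scheme (such a $b$ is not isolated in $G'$, since she is a $\pathjump$-target which is the endpoint of a good path), and by Lemma \ref{lemma:mjumpe_blue}, $\mjumpe b$ is a blue woman.

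Putting these two observations together, every transfer of path-charge moves it onto a blue woman, and nothing moves path-charge away from blue women afterwards. Therefore any woman holding positive path-charge at the end of the charging scheme is blue. There is no genuine obstacle here; the lemma is essentially a bookkeeping consequence of the definitions of $\pathjump$ and $\mjumpe$, together with Lemma \ref{lemma:mjumpe_blue} which already guarantees bluenss of $\mjumpe b$.
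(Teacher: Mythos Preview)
Your proposal is correct and follows essentially the same approach as the paper: both argue that the targets of stage~\ref{step:charging_4} transfers are blue by the definition of $\pathjump$, and the targets of stage~\ref{step:charging_5} transfers are blue by Lemma~\ref{lemma:mjumpe_blue}. The only minor quibble is the phrase ``nothing moves path-charge away from blue women afterwards''---stage~\ref{step:charging_5} does move path-charge away from some blue women, but since it lands on blue women again and is the last stage, the conclusion stands.
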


\begin{proof}
Note that $\pathjump b$ is a blue woman for every $b\in B$ that passes on her charge at stage~\ref{step:charging_4}. Thus, only blue women can receive a nonzero charge at stage~\ref{step:charging_4}.

For every $b\in B$ that passes on her charge at stage~\ref{step:charging_5},
$\mjumpe b$ is a blue woman by Lemma~\ref{lemma:mjumpe_blue}. Thus, only blue women can receive a nonzero charge at stage~\ref{step:charging_5}.
\end{proof}

\begin{lemma}
\label{lemma:x_node_no_unpopularity_charge}
At the end of the charging scheme, no woman in an $x$-node has a nonzero unpopularity-charge.
\end{lemma}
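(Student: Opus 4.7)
The plan is to argue by contradiction. Suppose $\beta_0\in B$ lies in an $x$-node (arising from an augmenting path $\alpha_0-\beta_0-\alpha_1-\cdots$, with $\alpha_1 = M(\beta_0)$) and has nonzero unpopularity-charge. By definition, this means $\beta_0$ received charge in stage~\ref{step:charging_2} or~\ref{step:charging_3} and did not pass it on in stage~\ref{step:charging_4}; equivalently, no critical arc is next to $\beta_0$.

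The stage~\ref{step:charging_2} case is easy to dispose of: any such charge would come from a man $a$ in a $y$-node with $(a,\beta_0)\in G'$, producing an arc $[a]\to[\beta_0]$ in $H$ from a $y$-node to an $x$-node, which is forbidden by Lemma~\ref{lemma:combined}(a). So $\beta_0$'s charge arose in stage~\ref{step:charging_3}, through $\mjump d = \beta_0$ for some unmatched $d$. The definition of $\mjump$ then forces $\beta_0$ to be a degree-$1$ endpoint of the maximal path in $G'$ containing $d$, so $\beta_0$ holds exactly one proposal, namely $\alpha_1$'s.

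The core of the argument---and the step I expect to be the most delicate---uses the fact that the number of proposals a woman holds is non-decreasing over time (a proposal is lost only by rejection, which can occur only when she already holds two). Since $\beta_0$ ends with a single proposal, she never held two at any point and therefore never rejected any proposal. In particular she never rejected $\alpha_0 = \opt(\beta_0)$; but for $\alpha_0$ to have become $1$- or $2$-promoted, every woman in $N(\alpha_0)$---including $\beta_0$---would have had to reject him at least once. Hence $\alpha_0$ is basic.

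It remains to exhibit the missing critical arc. Because $\alpha_0$ is basic he has not stopped proposing, so both his proposals are held; being unmatched forces his degree in $G'$ to be at most $1$, so both proposals sit with a single woman $w\neq\beta_0$ (else $\beta_0$ would hold $\alpha_0$'s proposal). The proposals never reached $\beta_0$ (she neither holds nor rejected any of them), and since a basic man's proposals traverse his strict preference list monotonically without wrap-around, $w$ must precede $\beta_0$ in that list, giving $w >_{\alpha_0} \beta_0$. Thus the arc $([\alpha_0],[w])$ satisfies all four conditions to be critical ($\alpha_0$ is not $2$-promoted; $(\alpha_0,w)\notin M$; $[\alpha_0]$ is blue because $\alpha_0$ is the start of its augmenting path; and $w\geq_{\alpha_0}\opt(\alpha_0)$), and it is next to $\beta_0$ since $(\alpha_0,\beta_0)\in\opt$, contradicting the assumption.
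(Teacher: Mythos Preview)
Your proof is correct, but it takes a considerably longer detour than the paper's argument. For the stage~\ref{step:charging_2} case you and the paper agree (no $y$-to-$x$ arc). For stage~\ref{step:charging_3}, the paper simply observes that any woman $\beta_0$ in an $x$-node holds two proposals at the end of the algorithm: her $\opt$-partner $\alpha_0$ is unmatched by $M$, which forces at most one of his proposals to be held (if both were held by some $w$, then $w$ would have $\alpha_0$ as her sole neighbor, making $\{\alpha_0,w\}$ a component and forcing $M$ to match $\alpha_0$); hence the other proposal was rejected by everyone in $N(\alpha_0)$, including $\beta_0$, so $\beta_0$ once held---and therefore still holds---two proposals. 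Consequently $\beta_0$ either has degree $2$ in $G'$ or sits in the trivial component $\{\beta_0,M(\beta_0)\}$, and in either case cannot equal $\mjump d$ for any unmatched $d$. That is the whole argument.

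Your route instead grants the (impossible) stage-3 charge, deduces $\beta_0$ holds one proposal, concludes $\alpha_0$ is basic, and then manufactures a critical arc from $\alpha_0$. This is logically valid, but notice that you pass through an already-contradictory configuration without flagging it: once you have ``$\alpha_0$ basic, unmatched, both proposals with a single $w$'', the same component argument above shows $\alpha_0$ would have to be matched. So the critical-arc construction is superfluous---you could have stopped several sentences earlier. The paper's version exploits this observation from the outset and never needs to invoke critical arcs or stage~\ref{step:charging_4} at all.
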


\begin{proof}
Let us assume that there exists a woman $c\in B$, who is in an $x$-node and holds a nonzero unpopularity-charge at the end of the charging scheme. By Lemma~\ref{lemma:no_yx_arc}, in $G'$ the node $c$ is adjacent to no man in a $y$-node. Thus, $c$ does not receive any charge at stage~\ref{step:charging_2}.

Thus, $c$ received a nonzero charge at stage~\ref{step:charging_3}. So $c$ is an end node of a (not directed non trivial) maximal path in $G'$, where both end nodes are distinct women. However, since  $c$ is in an $x$-node then either $c$ has degree $2$ in $G'$ or $(c, M(c))$ forms a connected component of~$G'$, contradiction.
\end{proof}

\begin{lemma}
\label{lemma:zero_total_charge_five_augmenting_path}
At the end of the charging scheme, no woman in a $5$-augmenting path has a nonzero charge.
\end{lemma}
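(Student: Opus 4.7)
The plan is to handle each of the three women $\beta_0,\beta_1,\beta_2$ of the $5$-augmenting path $\alpha_0-\beta_0-\alpha_1-\beta_1-\alpha_2-\beta_2$ separately. The endpoint $\beta_2$ is unmatched by $M$, so Lemma~\ref{lemma:only_matched_women_have_charge} immediately gives zero charge. By the construction of $H$, $\beta_0$ is contracted into an $x$-node, so Lemma~\ref{lemma:x_node_no_unpopularity_charge} kills her unpopularity-charge and Lemma~\ref{lemma:only_blue_women_have_path_charge} kills her path-charge (she is not blue). The woman $\beta_1$ sits in a $y$-node, so Lemma~\ref{lemma:only_blue_women_have_path_charge} again handles path-charge; the remainder of the plan is devoted to ruling out unpopularity-charge on $\beta_1$.

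For $\beta_1$ I split on whether a critical arc next to her exists. If one does, stage~\ref{step:charging_4} transfers away all her accumulated unpopularity-charge (stage~\ref{step:charging_5} produces only path-charge), so we are done. Otherwise, by the contrapositive of Lemma~\ref{lemma:critical_arc_five_path}, $\beta_1$ is $\alpha_1$-popular, and I will show that neither stage~\ref{step:charging_2} nor stage~\ref{step:charging_3} delivers any charge. For stage~\ref{step:charging_3}: since $\alpha_2$ is basic (Lemma~\ref{lemma:five_path_original_paper}) and $\alpha_2\indiff\alpha_1$ at $\beta_1$, two proposals from $\alpha_2$ alone cannot both be $\alpha_1$-good (the ``rejected by $\beta_1$ before'' clause would require $\alpha_2$'s proposal to revisit $\beta_1$, impossible while $\alpha_2$ is basic). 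Hence $\beta_1$ has degree $2$ in $G'$, is not an endpoint of any maximal path in $G'$, and no matching jump maps to her.

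For stage~\ref{step:charging_2}, the only candidates sending charge are basic men in $y$-nodes whose less preferred neighbour in $G'$ is $\beta_1$---hence one of $\beta_1$'s two neighbours $\alpha_2,c$ in $G'$. For $\alpha_2$: $\alpha_1$-goodness of his proposal forces the ``rejected by $\beta_1$ before'' clause, which (since $\alpha_2$ is basic) pushes $\alpha_2$'s other proposal strictly past $\beta_1$ in his list, so $\beta_1$ is $\alpha_2$'s \emph{more} preferred neighbour and no charge flows. For $c$: the same rejection-history argument handles the ``$c$ rejected by $\beta_1$ before'' subcase, leaving $c>_{\beta_1}\alpha_1$ strictly. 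In this last subcase, the assumption of no critical arc forces both of $\alpha_1$'s proposals strictly past $\beta_1$, so $\beta_1$ must reject $\alpha_1$ twice; a careful application of the superiority tiebreakers at the second such rejection shows $\alpha_2$'s proposal cannot be present at $\beta_1$ at that moment, after which an induction on $\beta_1$'s subsequent holdings shows every holding remains strictly superior to anything the basic $\alpha_2$ can offer---so $\alpha_2$'s proposal can never later be accepted at $\beta_1$, contradicting $(\alpha_2,\beta_1)\in M\subseteq E'$.

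The hard part is exactly this last subcase $c>_{\beta_1}\alpha_1$: Lemma~\ref{lemma:popularity_monotonicity_preference} cannot be invoked since the preference inequality goes the wrong way, so the argument has to dynamically track proposals at $\beta_1$ and exploit the precise superiority tiebreakers---in particular the fact that tiebreaker~$4$ can no longer help any basic man tied with $\alpha_1$ once $\beta_1$ has rejected $\alpha_1$---to force the desired contradiction.
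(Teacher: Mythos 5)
Your treatment of $\beta_2$, of $\beta_0$, and of the path-charge and stage~\ref{step:charging_3} charge of $\beta_1$ is sound (your popularity-based argument that $\beta_1$ has degree $2$ is a valid alternative to the paper's use of the degree statement in Lemma~\ref{lemma:five_path_original_paper}). The gap is in your last subcase for stage~\ref{step:charging_2}: a basic neighbour $c$ of $\beta_1$ lying in a $y$-node with $c>_{\beta_1}\alpha_1$. You call this the hard part and only sketch a dynamic argument tracking $\beta_1$'s holdings; that sketch is not a proof, and its pivotal claim --- that $\alpha_2$'s proposal cannot be present at $\beta_1$ when she rejects $\alpha_1$ for the second time --- does not follow from the tiebreakers: if by that moment $\beta_1$ has also rejected a proposal of $\alpha_2$ once, the last clause of Definition~\ref{def:proposals_preferences} no longer makes $\alpha_1$'s proposal superior to $\alpha_2$'s, so she may legitimately keep $\alpha_2$'s proposal while rejecting $\alpha_1$'s. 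As written, the subcase is open.

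In fact the subcase is vacuous for a one-line reason you missed. A man $c$ in a $y$-node has $\opt(c)$ unmatched by $M$; an unmatched woman never rejected any proposal (a woman who ever rejects holds two proposals from then on and is therefore matched by $M$), so no proposal of $c$ ever moved past $\opt(c)$ on his list. Hence $(c,\beta_1)\in E'$ forces $\beta_1\geq_c\opt(c)$, and since $\beta_1$ is matched by $M$, $\beta_1>_c\opt(c)$. Together with $c>_{\beta_1}\alpha_1=\opt(\beta_1)$ this makes $(c,\beta_1)$ a blocking pair for $\opt$, a contradiction. The paper exploits exactly this stability observation and thereby avoids your case split entirely: any man $a$ in a $y$-node who charges $\beta_1$ at stage~\ref{step:charging_2} satisfies $\beta_1>_a\opt(a)$ and hence $\alpha_1\geq_{\beta_1}a$; since $\beta_1$ is not $a$-popular (Lemma~\ref{lemma:least_prefered_unpopular}), Lemma~\ref{lemma:popularity_monotonicity_preference} makes her not $\alpha_1$-popular, so Lemma~\ref{lemma:critical_arc_five_path} yields a critical arc next to her and stage~\ref{step:charging_4} removes all her unpopularity-charge. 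Replace your sketched subcase with this argument and the proof closes.
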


\begin{proof}
Let $\alpha_0 - \beta_0 - \alpha_1 - \beta_1 - \alpha_2 - \beta_2$
be a $5$-augmenting path, where $\alpha_0$, $\alpha_1$, $\alpha_2 \in A$ and $\beta_0$, $\beta_1$, $\beta_2 \in B$.

First, $\beta_2$ is not matched by $M$ and thus has no charge at the end of the charging scheme by Lemma~\ref{lemma:only_matched_women_have_charge}.

Second, $\beta _0$ has no unpopularity-charge by Lemma \ref{lemma:x_node_no_unpopularity_charge} and no path-charge by Lemma~\ref{lemma:only_blue_women_have_path_charge}, since $\beta_0$ is in an $x$-node.

Finally, $\beta_1$ also has no path-charge by Lemma~\ref{lemma:only_blue_women_have_path_charge}, since $\beta_1$ is in a $y$-node. To finish the proof, it remains to show that at the end of the charging scheme $\beta_1$ has no unpopularity-charge.

Let us assume that $\beta_1$ receives a nonzero unpopularity charge at stage~\ref{step:charging_3}. Then $\beta_1=\mjump b$ for some $b\in B$, where $b$ is not matched by $M$ and is not isolated in $G'$. Thus, $\beta_1$ is the end node of a maximal (not directed) path in $G'$ containing at least two distinct women. However, the degree of $\beta_1$ in $G'$ equals the degree of $\alpha_2$ in $G'$ by Lemma~\ref{lemma:five_path_original_paper}, and so $\beta_1$ cannot be the end node of a maximal (not directed) path in $H'$ containing at least two distinct women.

Let us assume that $\beta_1$ receives a nonzero unpopularity charge at stage~\ref{step:charging_2}. Thus there exists $a\in A$, such that $a$ is in a $y$-node  and $\beta_1$ is the less preferred neighbor of $a$ in $G'$. Since $a$ is in a $y$-node, $a$ is basic and hence $\beta_1$ is not $a$-popular by Lemma~\ref{lemma:least_prefered_unpopular}.

 Since $a$ is in a $y$-node, $\opt(a)$ is not matched by $M$ and $\beta_1 \geq_a \opt(a)$. Note that $\beta_1$ is distinct from $\opt(a)$, since $\beta_1$ is matched by $M$, implying $\beta_1 >_a \opt(a)$. Thus, we have $\alpha_1\geq_{\beta_1} a$, because otherwise $\opt$ is not a stable matching in $G$. By Lemma~\ref{lemma:popularity_monotonicity_preference},  $\alpha_1\geq_{\beta_1} a$ implies that $\beta_1$ is not $\alpha_1$-popular. Hence, by Lemma~\ref{lemma:critical_arc_five_path} there is a critical arc next to $\beta_1$ and so $\beta_1$ passes on her unpopularity-charge at stage~\ref{step:charging_4}. 
\end{proof}

\begin{lemma}
\label{lemma:upper_bound_unpopularity_charge}
After stages~\ref{step:charging_2} and~\ref{step:charging_3}, every woman has a charge of at most $2$. Furthermore, after stages~\ref{step:charging_2} and~\ref{step:charging_3},  each woman, who is not matched by $M$ with a man from a $5$-augmenting path,  has a charge of at most $1$.
\end{lemma}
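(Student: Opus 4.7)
The plan is to account separately for the charge a woman gets during stage~\ref{step:charging_2} (from men) and the additional charge she may receive during stage~\ref{step:charging_3} (from other women), and then combine these bounds via a case analysis on the degree of $w$ in~$G'$.

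First, I would observe that the charge a woman $w$ collects during stage~\ref{step:charging_2} equals the number of men $a$, in $y$-nodes of $5$-augmenting paths, for which $w$ is the less preferred neighbor in $G'$. Since $w$ has degree at most $2$ in $G'$, this count is at most $2$. Moreover, if $w$ is matched by $M$ to some $a^*$ which does not lie on any $5$-augmenting path, then $a^*$ contributes nothing, and only the other neighbor of $w$ (if any) can contribute, so the count is at most $1$ in that case.

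Next, I would bound the stage~\ref{step:charging_3} contribution to $w$. By Lemma~\ref{lemma:mjump_injective}, at most one unmatched, non-isolated woman $b$ satisfies $\mjump b = w$. Since $b$ is not matched by $M$ and $M$ matches every degree-$2$ node of $G'$, such a $b$ has degree exactly~$1$ in $G'$. Hence $b$ can be the less preferred neighbor of at most one man, so she carries a charge of at most $1$ into stage~\ref{step:charging_3}; thus the stage~\ref{step:charging_3} contribution to $w$ is at most $1$.

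The key combining observation is that if $w$ accumulates a charge of $2$ at stage~\ref{step:charging_2}, then $w$ has degree exactly~$2$ in $G'$, and hence is not an endpoint of any maximal undirected path in $G'$; consequently no unmatched $b$ satisfies $\mjump b = w$, and $w$ gets no stage~\ref{step:charging_3} charge, so her total is~$2$. In every remaining case the stage~\ref{step:charging_2} contribution is at most $1$ and the stage~\ref{step:charging_3} contribution is at most $1$, yielding a total of at most $2$ and proving the first claim. For the second claim, if $w$ is unmatched then she ends stage~\ref{step:charging_3} with charge~$0$; if $w$ is matched by $M$ to a non-$5$-augmenting-path man $a^*$, then in the degree-$1$ subcase her only neighbor is $a^*$, so she receives $0$ at stage~\ref{step:charging_2} and at most $1$ at stage~\ref{step:charging_3}, while in the degree-$2$ subcase only her other neighbor can contribute at stage~\ref{step:charging_2} (giving at most $1$) and she receives $0$ at stage~\ref{step:charging_3}. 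I expect the main subtlety to be invoking cleanly that $M$ matches every degree-$2$ node of $G'$ (so unmatched women have degree at most $1$) together with the structural fact that $\mjump b$ only lands on endpoints of maximal paths; once these are in place the proof is a short case analysis on the degree of $w$.
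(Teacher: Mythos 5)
Your proof is correct and follows essentially the same route as the paper: bound the stage~\ref{step:charging_2} charge by the degree of the woman in $G'$ (with the matched partner contributing nothing when he is not on a $5$-augmenting path), bound the stage~\ref{step:charging_3} charge by observing that both the donor and the target of a matching jump are degree-$1$ endpoints of a maximal path in $G'$, and combine by a degree case analysis. The only cosmetic difference is that you invoke Lemma~\ref{lemma:mjump_injective} explicitly to rule out two women jumping to the same target, a point the paper leaves implicit via the disjointness of maximal paths.
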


\begin{proof}
Every woman has at most $2$ neighbors in $G'$. Thus, each woman has a charge of at most $2$ after stage~\ref{step:charging_2}.

Let us assume that a woman $c\in B$ receives a charge at stage~\ref{step:charging_3} passed from some woman $b\in B$. Then, both $b$ and $c$ are the end nodes of a maximal (not directed) path in $G'$. Hence, both $b$ and $c$ have degree $1$ in $G'$, and thus both $b$ and $c$ have a charge of at most $1$ after stage~\ref{step:charging_2}. So after stage~\ref{step:charging_3}, $c$ has a charge of at most $2$.

Let us show the second part of the statement. If $c$ is a woman not matched by $M$, then it is straightforward to see that $c$ receives a charge of at most $1$ at stage~\ref{step:charging_2} and receives no charge at stage~\ref{step:charging_3}. Now, suppose that $(a, c) \in M$ for some $a\in A$ and $c\in B$, where $a$ is not in a $5$-augmenting path. Thus, $a$ has a charge of zero after stage~\ref{step:charging_1}. Hence, after stage~\ref{step:charging_2} $c$ has a
charge of at most $1$. If $c$ receives a nonzero charge at stage~\ref{step:charging_3}, then $c$ is an end node of a maximal (not directed) path in $G'$, and hence has degree $1$ in $G'$, showing that $c$ has a charge of $0$ after stage~\ref{step:charging_2} and a charge of $1$ after stage~\ref{step:charging_3}. 
\end{proof}

\begin{lemma}
\label{lemma:upper_bound_path_charge}
After the end of the charging scheme, every woman has a path-charge of at most $2$.
\end{lemma}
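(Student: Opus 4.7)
The plan is to bound the path-charge of $w$ by analyzing the contributions from stages~\ref{step:charging_4} and~\ref{step:charging_5} separately, and then showing these contributions cannot both be nonzero. For stage~\ref{step:charging_4}, Lemma~\ref{lemma:pathjump_injective} guarantees that at most one woman $b$ passes charge to $w$: namely, $b = \opt(a)$ for the unique critical arc $([a], \cdot)$ with $\pathjump{[a]} = w$, if any. By Lemma~\ref{lemma:upper_bound_unpopularity_charge}, $b$'s charge at the start of stage~\ref{step:charging_4} is at most $2$.

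For stage~\ref{step:charging_5}, Lemma~\ref{lemma:mjumpe_injective} guarantees that at most one woman $b_5$ passes charge to $w$. Since $b_5$ is unmatched by $M$, she passes all her unpopularity-charge at stage~\ref{step:charging_3} and receives none at stage~\ref{step:charging_3} (as $\mjump$ always targets a matched woman, by the Remark after its definition). Hence her charge at the start of stage~\ref{step:charging_5} equals what she received at stage~\ref{step:charging_4}, which is at most $2$ by the same injectivity-plus-Lemma~\ref{lemma:upper_bound_unpopularity_charge} argument as above.

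To finish, I would show that $w$ cannot receive nonzero contributions from both stages. Assume otherwise: then $w$ is the end of a good path $P_A$ containing a critical arc (by Lemma~\ref{lemma:critical_arc_is_in_some_good_path}), entering $w$ via a non-$M$ arc, and $w = \mjumpe{b_5}$ for some unmatched $b_5$. If $\mjumpe{b_5} = \mjump{b_5}$, then $w$ is the other endpoint of a maximal undirected path in $G'$ from the degree-$1$ node $b_5$, so $w$ also has degree $1$ in $G'$; being matched by $M$ (Lemma~\ref{lemma:only_matched_women_have_charge}), her sole incoming arc in $H$ is the $M$-arc $(M(w), w)$, contradicting $P_A$ ending at $w$ via a non-$M$ arc. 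If instead $\mjumpe{b_5} = \pathjump{[\mjump{b_5}]}$, set $y_1 := [\mjump{b_5}]$: since $\mjump{b_5}$ has degree $1$ in $G'$ and her unique edge is her $M$-edge (now contracted inside $y_1$), the $y$-node $y_1$ has no external incoming arc in $H$.

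In this last case I would trace backwards along $P_A$ from $w$: using that $w$'s non-$M$ incoming arc is shared with the directed $\pathjump$-path from $y_1$ (whose arcs, originating from contracted $M$-edges, are all non-$M$), and that each intermediate $y$-node along this path has at most one external incoming arc (a consequence of the degree-$2$ bound in $G'$), one forces $y_1 \in P_A$. This is a contradiction: $y_1$ cannot start $P_A$ since good paths start with blue men, nor can it have a predecessor in $P_A$ since it has no external incoming arc. The main obstacle is precisely this last backward-tracing step, which hinges on the subtle structural relationship between good paths and $\pathjump$-paths through $y$-nodes in $H$.
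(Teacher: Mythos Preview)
Your proposal is correct and follows essentially the same three-part structure as the paper's proof: bound stage~\ref{step:charging_4} via Lemma~\ref{lemma:pathjump_injective}, bound stage~\ref{step:charging_5} via Lemma~\ref{lemma:mjumpe_injective}, and show mutual exclusivity by forcing the no-incoming-arc $y$-node $[\mjump{b_5}]$ onto the good path ending at $w$. Two small points: when invoking Lemma~\ref{lemma:mjumpe_injective} you should note explicitly that $b_5$ lies on a good path (she is $\pathjump{[a]}$ for a critical arc, so Lemma~\ref{lemma:critical_arc_is_in_some_good_path} applies); and in Case~1 the reason $w$ is matched by $M$ is simply the definition of $\mjump$, not Lemma~\ref{lemma:only_matched_women_have_charge}. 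The paper handles your Case~1 more tersely by first observing that $w$, being matched and the end of a good path, must have degree~$2$ in $G'$, which immediately forces $w\neq\mjump{b_5}$; your backward-tracing in Case~2 is then exactly the paper's argument that unique incoming arcs at $y$-nodes and at $w$ pin $[\mjump{b_5}]$ to the good path.
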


\begin{proof}
The plan of the proof is as follows. First, we prove that no woman receives a nonzero charge at both stage~\ref{step:charging_4} and stage~\ref{step:charging_5}. Then we prove that every woman receives charge of at most $2$ at stage~\ref{step:charging_4}, and finally
we prove that every woman receives a charge of at most $2$ at stage~\ref{step:charging_5}.

Let us suppose there was a woman $c\in B$ who receives a nonzero charge both at stage~\ref{step:charging_4} and stage~\ref{step:charging_5}. Since $c$ receives charge at stage~\ref{step:charging_4}, $c$ is an endpoint of a good path in $H$. Since $c$ receives charge at stage~\ref{step:charging_5}, there exists $b\in B$ such that $b$ is not matched by $M$ and such that $\mjumpe{b} = c$.  Since $c$ is matched by $M$ and $c$ is the end node of a good path, $c$ has degree $2$ in $G'$. Hence, $c \neq \mjump{b}$. Since $c=\mjumpe b$, we have that $\mjump{b}$ is in a $y$-node, which has no incoming arc in $H$. Moreover, there is a path in $H$
from $[\mjump{b}]$ to $c$, where all nodes except $c$ are $y$-nodes.  Note that every $y$-node has at most one incoming arc in $H$ and that $c$ has at most one incoming arc from a $y$-node. Thus, $[\mjump{b}]$ is in the good path containing $c$. However, no good path has a $y$-node with no incoming arc, contradiction.

Let us assume that a woman $c\in B$ gets a charge passed on from a woman $b\in B$ at stage~\ref{step:charging_4}. Thus, there is a critical arc $([a], [d])$, $a\in A$, $d\in B$ next to $b$, where
$\pathjump {[a]} = c$. By Lemma~\ref{lemma:pathjump_injective}, there is at most one
such critical arc $([a], [d])$. Also, since $a=M(b)$ there exists at most one woman $b$ who passes on her charge to $c$ at stage~\ref{step:charging_4}. By Lemma~\ref{lemma:upper_bound_unpopularity_charge}, such $b$ has a charge of at most $2$ after stage~\ref{step:charging_3}, showing that $c$ receives a charge of  at most $2$ at stage~\ref{step:charging_4}.

Let us assume that a woman $c\in B$ gets a nonzero charge passed on from a woman $b\in B$ at stage~\ref{step:charging_5}, where $\mjumpe b =c$.  Then $b$ is not matched by $M$, and hence $b$ has no charge after stage~\ref{step:charging_3}. By Lemma~\ref{lemma:critical_arc_is_in_some_good_path}, $b$ is in some good path in the graph $H$, because $b$ received a nonzero charge at stage~\ref{step:charging_4}. Hence, by  Lemma~\ref{lemma:mjumpe_injective} there is at most one such woman $b$ who passes her charge to $c$ at stage~\ref{step:charging_5}. Since $b$ has no charge after stage~\ref{step:charging_3} and receives a charge of  at most $2$ after stage~\ref{step:charging_4}, we know that $c$ receives a charge of at most $2$ at stage~\ref{step:charging_5}.
\end{proof}

\begin{lemma}
\label{lemma:upper_bound_total_charge}
At the end of the charging scheme, every woman has a charge of at most $3$.
\end{lemma}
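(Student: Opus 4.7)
The plan is to combine the previous three lemmas (Lemmas~\ref{lemma:zero_total_charge_five_augmenting_path}, \ref{lemma:upper_bound_unpopularity_charge}, and \ref{lemma:upper_bound_path_charge}) by splitting on whether the woman under consideration sits on a $5$-augmenting path. The total charge a woman carries at the end is by definition the sum of her unpopularity-charge and her path-charge, so it suffices to bound these two quantities separately, using the case distinction to avoid the otherwise bad $2+2=4$ bound.

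First I would take an arbitrary woman $c \in B$ and consider the case that $c$ lies on some $5$-augmenting path in $M \oplus \opt$. In that case Lemma~\ref{lemma:zero_total_charge_five_augmenting_path} immediately gives that $c$ has total charge $0 \le 3$ at the end of the charging scheme, and there is nothing more to do.

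Next I would handle the complementary case, in which $c$ is not in any $5$-augmenting path. Then either $c$ is unmatched by $M$, or $c$ is matched by $M$ to some man $a$ who does not lie on any $5$-augmenting path. In either case, $c$ is not matched by $M$ to a man from a $5$-augmenting path, so the second part of Lemma~\ref{lemma:upper_bound_unpopularity_charge} applies and bounds the charge of $c$ after stages~\ref{step:charging_2} and~\ref{step:charging_3} by~$1$. Since the final unpopularity-charge of $c$ is at most her charge after stage~\ref{step:charging_3} (it is either that value or $0$, depending on whether she passes it on at stage~\ref{step:charging_4}), her unpopularity-charge is at most~$1$. Combining this with the bound of~$2$ on her path-charge from Lemma~\ref{lemma:upper_bound_path_charge} yields a total charge of at most $1+2 = 3$, as required.

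There is no real obstacle here: the work has already been done in the previous lemmas, and the only conceptual point to verify is that the case "matched by $M$ with a man from a $5$-augmenting path" of Lemma~\ref{lemma:upper_bound_unpopularity_charge} coincides exactly with the case handled by Lemma~\ref{lemma:zero_total_charge_five_augmenting_path}, so that the refined bound of~$1$ on the unpopularity-charge is available in every case that is not already covered.
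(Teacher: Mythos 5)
Your proof is correct and follows essentially the same route as the paper: split off the women on $5$-augmenting paths via Lemma~\ref{lemma:zero_total_charge_five_augmenting_path}, then combine the refined bound of $1$ from Lemma~\ref{lemma:upper_bound_unpopularity_charge} with the path-charge bound of $2$ from Lemma~\ref{lemma:upper_bound_path_charge}. The only cosmetic difference is that the paper disposes of women unmatched by $M$ separately using Lemma~\ref{lemma:only_matched_women_have_charge}, whereas you fold them into the second case; both are valid since the second part of Lemma~\ref{lemma:upper_bound_unpopularity_charge} covers unmatched women as well.
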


\begin{proof}
Let $c\in B$ be some woman. If $c$ is not matched by $M$, the charge of $c$ is zero at the end of the charging scheme by Lemma~\ref{lemma:only_matched_women_have_charge}.  If $c$ is in a $5$-augmenting path, the charge of $c$ is zero at the end by Lemma~\ref{lemma:zero_total_charge_five_augmenting_path}. 

So we may assume that $c$ is matched by $M$, but is not in a $5$-augmenting path. Let $a\in A$ be the man such that $a = M(c)$. Obviously, $a$ is not in a $5$-augmenting path. Thus,
by Lemma~\ref{lemma:upper_bound_unpopularity_charge} $c$ has  an unpopularity-charge of at most $1$ at the end of the charging scheme.
By Lemma~\ref{lemma:upper_bound_path_charge}, $c$ has  a  path-charge of at most $2$ at the end, so the total charge of $c$ is at most $3$.
\end{proof}

\begin{theorem}
\label{thetheorem}
Let $\alpha_0 - \beta_0 - \dots - \alpha_k - \beta_k$  be an augmenting path of length at least $7$, where $\alpha_i\in A$ and $\beta_i\in B$, $i=0,1,\ldots,k$. Then there is a woman $\beta_i$, $i=1,\ldots,k-1$  who has zero unpopularity-charge at the end of the charging scheme.
\end{theorem}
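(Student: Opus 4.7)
The plan is to prove that the first internal woman $\beta_1$ has zero unpopularity-charge at the end of the charging scheme. Since $k \geq 3$ in our setting, the index $i = 1$ lies in $\{1, \ldots, k-1\}$, so this suffices. The argument closely follows the analysis of Lemma \ref{lemma:zero_total_charge_five_augmenting_path}, where the analogous (in fact stronger) statement was proved for $\beta_1$ of a $5$-augmenting path.

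I would first verify that the key facts used for $\alpha_1$ and $\beta_1$ in that analysis actually carry over to any augmenting path of length at least $5$: (i) $[\alpha_1]$ is an $x$-node and therefore not a $y$-node, which is immediate from the contraction rule in the construction of $H$; (ii) $\alpha_1$ is not $2$-promoted and strictly prefers $\beta_1$ to $\beta_0 = M(\alpha_1)$; and (iii) the degree-dependency that $\beta_1$ has degree $1$ in $G'$ if and only if $\alpha_2$ has degree $1$ in $G'$. Properties (ii) and (iii) are stated in Lemma \ref{lemma:five_path_original_paper} only for $5$-augmenting paths, but the appendix proof should only rely on the prefix $\alpha_0$--$\beta_0$--$\alpha_1$--$\beta_1$--$\alpha_2$ of the path together with the stability of $M$ and $\opt$ and the fact that $\alpha_0$ is unmatched, so they should transfer unchanged. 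Granted (ii), the argument of Lemma \ref{lemma:critical_arc_five_path} also transfers: any $\alpha_1$-unpopular $\beta_1$ forces a critical arc next to $\beta_1$ in our longer path.

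Next I perform a case analysis on how $\beta_1$ could receive an unpopularity-charge. If she receives nothing in stages $2$ or $3$, the conclusion holds trivially. If she receives a stage-$2$ charge, then, exactly as in the proof of Lemma \ref{lemma:zero_total_charge_five_augmenting_path}, there is a basic man $a$ in a $y$-node of some $5$-augmenting path with $\beta_1$ as his less preferred $G'$-neighbor; Lemma \ref{lemma:least_prefered_unpopular} gives $\beta_1$ not $a$-popular; using $\opt(a)$ unmatched and $\beta_1$ matched, together with stability of $\opt$, we obtain $\alpha_1 \geq_{\beta_1} a$, so Lemma \ref{lemma:popularity_monotonicity_preference} yields $\beta_1$ not $\alpha_1$-popular; the transferred Lemma \ref{lemma:critical_arc_five_path} then produces a critical arc next to $\beta_1$, and $\beta_1$ passes her charge on in stage $4$. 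If she receives a stage-$3$ charge, then $\beta_1 = \mjump b$ for some unmatched $b$, so $\beta_1$ has degree $1$ in $G'$, and property (iii) forces $\alpha_2$ to also have degree $1$; the maximal $G'$-path starting at $\beta_1$ then terminates at $\alpha_2 \in A$ rather than at a woman, contradicting $\beta_1 = \mjump b$.

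The main obstacle I expect is justifying that (ii) and particularly the degree-dependency (iii) really do hold for any augmenting path of length $\geq 5$, not only for $5$-augmenting paths. Both appear in the statement of Lemma \ref{lemma:five_path_original_paper}, but one needs to inspect the appendix proof to confirm that its derivations of these facts do not secretly use the unmatched endpoint $\beta_2$ of the $5$-path; if they do, auxiliary lemmas adapting the same stability-based reasoning near the start of a longer path would be needed.
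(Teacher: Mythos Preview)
Your proposal has a genuine gap, and it is precisely the obstacle you flagged at the end: the appendix proof of Lemma~\ref{lemma:five_path_original_paper} \emph{does} use the unmatched endpoint $\beta_2$, and the properties you need do not transfer to longer augmenting paths.

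Concretely, in the appendix the fact that $\alpha_2$ is basic is derived from $\beta_2$ being unmatched in $M$. The basic status of $\alpha_2$ is then used to show that $\alpha_1$ is not $2$-promoted, which in turn is used to show that $\alpha_1$ prefers $\beta_1$ to $\beta_0$. So your property~(ii) collapses entirely once $\beta_2$ is matched, as it is in any augmenting path of length $\geq 7$. Likewise, the degree dependency~(iii) is obtained in the appendix from the fact that $\alpha_2$ has both proposals accepted (again because $\alpha_2$ is basic) together with the fact that $\beta_1$ holds two proposals (which is argued via $\beta_1 >_{\alpha_1} \beta_0$, i.e.\ via~(ii)). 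Both ingredients fail in general.

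This is not a cosmetic issue. If $\alpha_1$ ``points left'' (prefers $\beta_0$ to $\beta_1$), then in your stage-2 case the candidate arc out of $[\alpha_1]$ may well be the $M$-edge $(\alpha_1,\beta_0)$ itself, so there need not be any critical arc next to $\beta_1$; and in your stage-3 case, $\beta_1$ may have degree~$1$ while $\alpha_2$ has degree~$2$, so the maximal $G'$-path starting at $\beta_1$ can continue past $\alpha_2$ to another woman, and $\beta_1=\mjump b$ is no longer contradictory. Hence $\beta_1$ can carry nonzero unpopularity-charge in a long augmenting path.

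The paper's proof is structurally different from what you propose: it does not fix the index at $i=1$. Instead it introduces the notion of $\alpha_i$ ``pointing left/right,'' shows that $\alpha_k$ points left and that not all of $\alpha_1,\dots,\alpha_{k-1}$ can point left (the latter argued by contradiction using the assumed nonzero unpopularity-charge of $\beta_1$), and then picks an index $t$ with $\alpha_t$ pointing right and $\alpha_{t+1}$ pointing left. At that particular $\beta_t$ one can recover the needed ingredients: the unpopularity-charge hypothesis gives $\beta_t$ not $\alpha_t$-popular, hence $\beta_t$ rejected $\alpha_t$ at most once (so $\alpha_t$ is not $2$-promoted), and ``pointing right'' supplies $\beta_t >_{\alpha_t} M(\alpha_t)$, yielding a critical arc next to $\beta_t$. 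The correct witness woman thus depends on the pointing pattern of the path and is not always $\beta_1$.
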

\begin{proof}[Proof of Theorem~\ref{thetheorem}]
Let us assume that for every $i=1,\ldots,k-1$, the woman $\beta_i$ has a nonzero unpopularity-charge at the end of the algorithm.

For the proof we need the following definition.
For $i=1,\ldots,k$, we say that the man $\alpha_i$ is \tbdefined{pointing left} if $\alpha_i$
prefers $\beta_{i-1}$ to $\beta_i$; otherwise we say that $\alpha_i$ is \tbdefined{pointing right}.

\begin{claim}
\label{claim:pointing}
There is $i=1,\ldots,k-1$, such that $\alpha_i$ is pointing right. Moreover, $\alpha_k$ is pointing left.
\end{claim}

\begin{proof}
First, suppose for every $i=1,\ldots,k-1$, the man $\alpha_i$ is pointing left. 

In particular, $\alpha_1$ is pointing left, i.e. $\beta_0>_{\alpha_1} \beta_1$. Hence, we also have $\alpha_0 \geq_{\beta_0} \alpha_1$, since otherwise $(\beta_0,\alpha_1)$ is a blocking pair for $\opt$. Note, that $\alpha_0$ is $2$-promoted and has degree $1$ in $G'$, so $\beta_0$ rejected $\alpha_0$ as a $2$-promoted man. Because $\alpha_0 \geq_{\beta_0} \alpha_1$ and $(\alpha_1,\beta_0)$ is in $G'$,  the man $\alpha_1$ is $2$-promoted as well.

Furthermore, $\alpha_2$ is also pointing left, i.e. $\beta_1>_{\alpha_2} \beta_2$. Hence, we also have $\alpha_1 \geq_{\beta_1} \alpha_2$, since otherwise $(\beta_1,\alpha_2)$ is a blocking pair for $\opt$. Since $\alpha_1$ is $2$-promoted, $\beta_1$ rejected $\alpha_1$ as a $1$-promoted man. Because $\alpha_1 \geq_{\beta_1} \alpha_2$ and $(\alpha_2,\beta_1)$ is in $G'$, we have that the man $\alpha_2$ is promoted.

By the statement of the theorem, $\beta_1$ has a nonzero unpopularity-charge. Since during the algorithm $\beta_1$ rejected $\alpha_1$, the woman $\beta_1$ has $2$ proposals at the end of the algorithm. Hence, $\beta_1$ is not an end node of a maximal (not directed) path in $G'$ containing at least two distinct women, and thus $\beta_1$ did not receive any unpopularity-charge at stage~\ref{step:charging_3} of the charging scheme.

Thus, $\beta_1$ received a nonzero unpopularity-charge at stage~\ref{step:charging_2}. Let $a\in A$ be the man who passed on his charge to $\beta_1$ at stage~\ref{step:charging_2}. Because $a$ is in a $y$-node, $a$ is basic at the end of the algorithm. Since $\alpha_1$ and $\alpha_2$ are promoted, we know that $a$ is different from $\alpha_1$ and from $\alpha_2$. Since $(a,\beta_1)$ is an edge in $G'$ and $(\alpha_1,\beta_1)$ is not in $G'$ even though $\alpha_1$ is promoted and $a$ is basic, we have that $a >_{\beta_1} \alpha_1=\opt(\beta_1)$. On the other hand, since $a$ is in a $y$-node, we have that $\beta_1 >_a \opt(a)$, showing that $(a, \beta_1)$ is a blocking edge for $\opt$, contradiction.

Finally, let us show that $\alpha_k$ is pointing left. Note that $\beta_k$ is not matched by $M$. Thus, if $\alpha_k$ is pointing right, the edge $(\alpha_k,\beta_k)$ is blocking for $M$, contradiction.
\end{proof}

By Claim~\ref{claim:pointing}, there exists $\alpha_t$, $t=1,\ldots,k-1$, such that $\alpha_t$ is pointing to the right and $\alpha_{t+1}$ is pointing to the left. Let us fix such $t$. Note that $\alpha_t \indiff_{\beta_t}
\alpha_{t+1}$, because otherwise either $(\alpha_t,\beta_t)$ is blocking for $M$ or $(\alpha_{t+1},\beta_t)$ is blocking for $\opt$.

\begin{claim}
The woman $\beta_t$ is not $\alpha_t$-popular.
\end{claim}
\begin{proof}
By the assumption $\beta_t$ has a nonzero unpopularity-charge. Now, let us assume that $\beta_t$ received a nonzero unpopularity-charge at stage~\ref{step:charging_3}. Then $\beta_t$ has one proposal at the end of the algorithm, implying that $\beta_t$ is not $\alpha_t$-popular.

Now, let us assume that $\beta_t$ received a nonzero unpopularity-charge at stage~\ref{step:charging_2}. Let $a\in A$ be the man who passed on a nonzero charge to $\beta_t$ at stage~\ref{step:charging_2}. Since $a$ is in a $y$-node and $(a,\beta_t)$ is in $G'$, we have $\beta_t >_a \opt(a)$. Thus, we have $\alpha_t \geq_{\beta_t} a$, because otherwise $(a,\beta_t)$ is a blocking pair for $\opt$. 
Since $\beta_t$ is the less preferred neighbor of $a$ in $G'$, $\beta_t$ is not $a$-popular by Lemma~\ref{lemma:least_prefered_unpopular}. Furthermore, $\alpha_t \geq_{\beta_t} a$ and Lemma~\ref{lemma:popularity_monotonicity_preference}, imply that $\beta_t$ is not $\alpha_t$-popular.
\end{proof}

The next claim shows that there is a critical arc next to $\beta_t$. Thus, $\beta_t$ passes on all of her charge at stage~\ref{step:charging_4}, and so $\beta_t$ does not have any unpopularity-charge at the end of the charging scheme, finishing the proof of the theorem.

\begin{claim}
There is a critical arc next to $\beta_t$.
\end{claim}
\begin{proof}
Since $\beta_t$ is not $\alpha_t$-popular, by Lemma~\ref{lemma:monotonicity_popularity_over_time} $\beta_t$ rejected $\alpha_t$ at most once. Thus, there is $b\in B$ such that $(\alpha_t,b)$ is an edge in $G'$ and $b\geq_{\alpha_t}\beta_t=\opt(\alpha_t)$. Recall, that $\alpha_t$ is pointing right, i.e. $\beta_t>_{\alpha_t}\beta_{t-1}$, so $b\neq\beta_{t-1}=M(\alpha_t)$. Obviously, $\alpha_t$ is not in a $y$-node, so $(\alpha_t,b)$ is a critical arc next to $\beta_t$.
\end{proof}
\end{proof}

\begin{lemma}
\label{lemma:upper_bound_total_charge_augmentic_path}
At the end of the charging scheme, the sum of charges of all 
women in  a $(2\ell + 5)$-augmenting path is at most $3\ell$ .
\end{lemma}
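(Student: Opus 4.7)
The plan is to leverage two kinds of savings in the charging scheme beyond the generic per-woman bound of~$3$ from Lemma~\ref{lemma:upper_bound_total_charge}: deterministic structural savings at the two endpoints of the path, arising from $\beta_0$ lying in an $x$-node and $\beta_{k-1}$ lying in a $y$-node of $H$, together with a single additional ``witness'' saving supplied by Theorem~\ref{thetheorem}. Together these will just reach the target $3\ell$.

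The case $\ell = 0$ is a $5$-augmenting path and is already handled by Lemma~\ref{lemma:zero_total_charge_five_augmenting_path}. So I would assume $\ell \geq 1$ and write the path as $\alpha_0-\beta_0-\cdots-\alpha_k-\beta_k$ with $k = \ell+2$, so it contains $\ell+3$ women $\beta_0,\beta_1,\ldots,\beta_k$. First I would establish three boundary facts. The woman $\beta_k$ is unmatched by $M$, so its charge is~$0$ by Lemma~\ref{lemma:only_matched_women_have_charge}. By construction of $H$, $[\beta_0]$ is an $x$-node, so Lemma~\ref{lemma:x_node_no_unpopularity_charge} kills its unpopularity-charge and Lemma~\ref{lemma:only_blue_women_have_path_charge} kills its path-charge, giving total~$0$. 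Finally, $[\beta_{k-1}]$ is a $y$-node, so Lemma~\ref{lemma:only_blue_women_have_path_charge} again gives path-charge~$0$; and since $\alpha_k = M(\beta_{k-1})$ is not on a $5$-augmenting path (our path has length $\geq 7$), Lemma~\ref{lemma:upper_bound_unpopularity_charge} bounds the unpopularity-charge of $\beta_{k-1}$ by~$1$, so its total charge is at most~$1$.

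At this stage $\beta_0,\beta_{k-1},\beta_k$ together contribute at most $0+1+0 = 1$, and the $\ell$ remaining blue middle women $\beta_1,\ldots,\beta_{k-2}$ contribute at most $3$ each by Lemma~\ref{lemma:upper_bound_total_charge}, for a running total of at most $3\ell+1$. The final unit of saving comes from Theorem~\ref{thetheorem}: since the path has length at least~$7$, there exists some $j\in\{1,\ldots,k-1\}$ with $\beta_j$ having zero unpopularity-charge at the end of the scheme. I would then split on cases. If $j = k-1$, the extra saving combines with the already-zero path-charge of $\beta_{k-1}$ to force its total charge to~$0$, and the middle women contribute at most $3\ell$ in total. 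If instead $j \in \{1,\ldots,k-2\}$, then $\beta_j$ is blue with only path-charge, bounded by~$2$ via Lemma~\ref{lemma:upper_bound_path_charge}, so the middle women contribute at most $2 + 3(\ell-1) = 3\ell - 1$, and together with $\beta_{k-1}$'s bound of~$1$ this again sums to~$3\ell$.

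The main subtlety I anticipate is this case split on $j$: the saving from Theorem~\ref{thetheorem} is only one unit, so it has to land on a woman whose contribution is not already minimal, and the observation that both cases ($j = k-1$ versus $j\leq k-2$) balance out to exactly $3\ell$ is what makes the bound tight. This is also the reason the structural savings on $\beta_0$ and $\beta_{k-1}$ (via the $x$/$y$-node lemmas) cannot be dispensed with; nothing else in the plan is more than straightforward arithmetic.
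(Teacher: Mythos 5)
Your proof is correct and takes essentially the same approach as the paper: the same boundary savings at $\beta_0$, $\beta_{k-1}$, $\beta_k$ (via Lemmas~\ref{lemma:only_matched_women_have_charge}, \ref{lemma:only_blue_women_have_path_charge}, \ref{lemma:x_node_no_unpopularity_charge}), the same per-woman bounds from Lemmas~\ref{lemma:upper_bound_unpopularity_charge} and~\ref{lemma:upper_bound_path_charge}, and the same extra unit of saving from Theorem~\ref{thetheorem}. The only difference is bookkeeping: the paper sums path-charges ($\leq 2\ell$) and unpopularity-charges ($\leq \ell$) separately over the whole path, which makes your case split on whether the witness $\beta_j$ equals $\beta_{k-1}$ or lies among the middle women unnecessary.
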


\begin{proof}
Note that a $(2\ell + 5)$-augmenting path has $\ell + 3$ women. If $\ell = 0$, the statement follows by Lemma~\ref{lemma:zero_total_charge_five_augmenting_path}.
So we may assume $\ell \geq 1$.

The sum of the path-charges is at most $2\ell$. Indeed, the last woman has a zero path-charge by Lemma~\ref{lemma:only_matched_women_have_charge}. The first woman and second-to-last woman both have
a zero path-charge by Lemma~\ref{lemma:only_blue_women_have_path_charge}, since they are not blue women. Each of the remaining $\ell$ women has a path-charge of at most $2$ by Lemma~\ref{lemma:upper_bound_path_charge}.

The sum of the unpopularity-charges is at most $\ell$. Indeed, the last woman has a zero unpopularity-charge by Lemma~\ref{lemma:only_matched_women_have_charge}.  The first woman has a zero
unpopularity-charge by Lemma~\ref{lemma:x_node_no_unpopularity_charge}. By Theorem~\ref{thetheorem},
there exists a woman, who is not the first woman nor the last woman and who has a zero unpopularity-charge. None of the remaining $\ell$ women is matched by $M$ with a man from a $5$-augmenting path. Hence, each of the remaining $\ell$ women has an unpopularity-charge of at most $1$ by Lemma~\ref{lemma:upper_bound_unpopularity_charge}. Thus, the sum of charges of all women in the augmenting path is at most $3\ell$.
\end{proof}

\begin{theorem}
\label{theorem:approximation_guarantee}
The algorithm in~\cite{HuangKavitha2015} (described for the sake of completeness in Section~\ref{section:algorithm}) is a $\frac{13}{9}$-approximation algorithm.
\end{theorem}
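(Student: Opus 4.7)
The plan is to use the charging scheme together with the standard decomposition of $M\oplus \opt$ into paths and cycles. By Lemma~\ref{lemma:three_path_original_paper} every augmenting path has length at least $5$, so the augmenting paths are exactly the $(2\ell+5)$-augmenting paths for $\ell\geq 0$. Let $A_\ell$ denote their number, and let $m_c$ denote the number of edges of $M$ which are \emph{not} contained in any augmenting path of $M\oplus \opt$ (equivalently, edges of $M\cap\opt$ together with edges of $M$ lying in cycles or non-augmenting path components). Then I would first record the simple identities
\begin{align*}
|M|&=m_c+\sum_{\ell\geq 0}(\ell+2)A_\ell\,, & |\opt|&\leq m_c+\sum_{\ell\geq 0}(\ell+3)A_\ell\,,
\end{align*}
where the inequality uses that every non-augmenting component of $M\oplus\opt$ contains at least as many edges of $M$ as of $\opt$, and that $M\cap\opt$ contributes equally to both sides.

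Next, I would account for the total charge distributed by the scheme. By Step~\ref{step:charging_1}, each $5$-augmenting path produces exactly one unit of charge (at the unique man $\alpha_2$ lying in a $y$-node), while longer augmenting paths contribute nothing. Since the remaining four stages of the scheme only redistribute charge, the total charge at the end equals $A_0$. By Lemma~\ref{lemma:only_matched_women_have_charge} this charge lies entirely on women matched by $M$, and it splits as a sum over the components of $M\oplus \opt$. Combining Lemma~\ref{lemma:zero_total_charge_five_augmenting_path} (no charge on $5$-augmenting paths), Lemma~\ref{lemma:upper_bound_total_charge_augmentic_path} (charge at most $3\ell$ on a $(2\ell+5)$-augmenting path) and Lemma~\ref{lemma:upper_bound_total_charge} (charge at most $3$ on any matched woman elsewhere), I obtain the single key inequality
\[A_0\ \leq\ 3m_c + \sum_{\ell\geq 1}3\ell\, A_\ell\,.\]

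Finally, I would put both pieces together. Plugging the identities into $9|\opt|-13|M|$ gives
\[9|\opt|-13|M|\ \leq\ -4m_c + \sum_{\ell\geq 0}(1-4\ell)A_\ell\ =\ -4m_c + A_0 - \sum_{\ell\geq 1}(4\ell-1)A_\ell\,,\]
and substituting the charge bound for $A_0$ yields
\[9|\opt|-13|M|\ \leq\ -m_c - \sum_{\ell\geq 1}(\ell-1)A_\ell\ \leq\ 0\,,\]
from which $|\opt|/|M|\leq 13/9$ follows. The only delicate point is verifying that every unit of charge generated in Step~\ref{step:charging_1} is indeed accounted for by one of the three charge-bounding lemmas; this amounts to checking that the five stages of the scheme are total (charge never gets stuck on a node that is not covered by the relevant bound), which is exactly the content of Lemma~\ref{lemma:only_matched_women_have_charge} together with Lemma~\ref{lemma:zero_total_charge_five_augmenting_path}. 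Once this conservation step is clean, the rest reduces to the short algebraic manipulation above.
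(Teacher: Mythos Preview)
Your argument is correct and is essentially the paper's own proof, repackaged with different bookkeeping. Your variables $A_0$, $\sum_{\ell\ge 1}A_\ell$, and $m_c+\sum_\ell \ell A_\ell$ correspond exactly to the paper's $t$, $k$, and $\ell_\Sigma$; your key charge inequality $A_0\le 3m_c+\sum_{\ell\ge1}3\ell A_\ell$ is precisely the paper's Claim~\ref{claim:main_theorem_inequalities} ($t\le 3\ell_\Sigma$), proved by the same three lemmas, and your computation of $9|\opt|-13|M|\le 0$ is just an unrolled version of the ratio estimate the paper carries out using $|M|\ge \ell_\Sigma+2(t+k)$ and $|\opt|\le \ell_\Sigma+3(t+k)$. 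The only cosmetic difference is that the paper additionally records $k\le\ell_\Sigma$ and bounds $t+k\le 4\ell_\Sigma$ before taking the ratio, whereas you keep the full distribution $(A_\ell)_{\ell\ge 1}$ and let the algebra absorb it; both routes yield the same $\tfrac{13}{9}$ with no slack.
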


\begin{proof}
Let $Q'$ be the set of all components of $M \cup \opt$.
For each $q \in Q'$, choose $\ell_q$ such that:
\begin{itemize}
\item If $q$ is an augmenting path, $q$ has $2\ell_q + 5$ edges.
\item If $q$ is a cycle or an even (non-augmenting)  alternating path, $q$ has $2\ell_q$ edges.
\item If $q$ is an odd (non-augmenting) alternating path, $q$ has $2\ell_q - 1$ edges.
\end{itemize}

Let $t$ be the number of $5$-augmenting paths and let $k$ be the number of augmenting paths in $M\oplus \opt$ which have at least $7$ edges. Let us also define $\ell_\Sigma:= \Sigma_{q \in Q'} \ell_q$.

\begin{claim}
\label{claim:main_theorem_inequalities}
We have $t \leq 3 \ell_\Sigma$ and $k \leq \ell_\Sigma$.
\end{claim}

\begin{proof}

Note that at the beginning of the charging scheme the sum of all charges equals $t$. Thus, to prove that $t \leq 3 \ell_{\Sigma}$ it is enough to show that the sum of all charges is at most $3 \ell_{\Sigma}$ at the end of the charging scheme.

We will prove  that for every
$q \in Q'$, the sum of the charges of the nodes in $q$ is at most $3 \ell_q$. If $q$ is an odd (non-augmenting) alternating path or an alternating cycle, then $q$ has at most $\ell_q$ women. Thus, the statement follows by Lemma~\ref{lemma:upper_bound_total_charge}. Similarly, if $q$ is an even (non-augmenting) alternating path, the statement follows by Lemma~\ref{lemma:only_matched_women_have_charge} and Lemma~\ref{lemma:upper_bound_total_charge}. 
If $q$ is an augmenting path, the statement follows directly from Lemma~\ref{lemma:upper_bound_total_charge_augmentic_path}.

Finally, we have $\ell_\Sigma \geq k$, because there are $k$ augmenting paths with at least $7$ edges and $\ell_q \geq 1$ for every such augmenting path $q$.
\end{proof}

\begin{claim}
\label{claim:main_theorem_output_size_optimum_size}
We have $\abs{M} \geq \ell_\Sigma + 2(t+k)$ and $\abs{\opt} \leq \ell_\Sigma +
3(t+k)$.
\end{claim}

\begin{proof}
First, $M$ has at least $\ell_\Sigma + 2(t+k)$ edges, since each cycle and each (non-augmenting) alternating path $q \in Q'$
has $\ell_q$ edges in $M$ and each augmenting path $q \in Q'$ has $\ell_q +
2$ edges in $M$.

Second, $\opt$ has at most $\ell_\Sigma + 3(t+k)$ edges, since  each cycle and each (non-augmenting) alternating path $q
\in Q'$ has at most $\ell_q$ edges in $\opt$ and each augmenting path $q \in
Q'$ has $\ell_q + 3$ edges.
\end{proof}

Thus, we have
\begin{align*}
\frac{\abs \opt}{\abs M}
&\leq
\frac{\ell_\Sigma + 3(t+k)}{\ell_\Sigma + 2(t+k)}
= \frac{13}{9} + \frac{-4\ell_\Sigma + (t+k)}{9(\ell_\Sigma + 2(t+k))}
\\ &\leq \frac{13}{9} + \frac{-4\ell_\Sigma + 3 \ell_\Sigma + \ell_\Sigma}{9(\ell_\Sigma + 2(t+k))}
= \frac{13}{9}\,,
\end{align*}
where the first inequality follows from Claim~\ref{claim:main_theorem_output_size_optimum_size} and the second inequality follows
from Claim~\ref{claim:main_theorem_inequalities} and from the fact that the denominator is non negative.
\end{proof}

For the sake of completeness, we give a proof that our analysis of the algorithm by Huang and Telikepalli~\cite{HuangKavitha2015}  is tight.
The next theorem follows from the example provided by Radnai~\cite{Radnai} (Section 3.6.1). 

\begin{theorem}
The bound in Theorem~\ref{theorem:approximation_guarantee} is tight.
\label{theorem:approximation_guarantee_tight}
\end{theorem}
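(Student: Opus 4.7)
The plan is to exhibit the explicit instance constructed by Radnai~\cite{Radnai} (and possibly a family obtained by taking disjoint copies) and verify that, under some valid execution, the algorithm from Section~\ref{section:algorithm} produces a stable matching $M$ whose cardinality is exactly $9/13$ times that of a maximum stable matching $\opt$. Looking back at the chain of inequalities in the proof of Theorem~\ref{theorem:approximation_guarantee},
$$\frac{|\opt|}{|M|} \leq \frac{\ell_\Sigma + 3(t+k)}{\ell_\Sigma + 2(t+k)} = \frac{13}{9} + \frac{-4\ell_\Sigma + (t+k)}{9(\ell_\Sigma + 2(t+k))},$$
equality is forced exactly when $t = 3\ell_\Sigma$ and $k = \ell_\Sigma$. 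The smallest integer solution is $\ell_\Sigma = 1$, $t = 3$, $k = 1$: the symmetric difference $M \oplus \opt$ should consist of three $5$-augmenting paths together with one $7$-augmenting path, giving $|\opt| = 3\cdot 3 + 1\cdot 4 = 13$ against $|M| = 3\cdot 2 + 1\cdot 3 = 9$.

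First, I would reproduce Radnai's bipartite graph $G = (A \cup B, E)$, its adjacency sets, and its (ties on the women's side only) preference lists, choosing them precisely so that $M \oplus \opt$ admits the decomposition above. Second, I would simulate the proposal phase on this instance, tracking the proposals $p^1_a, p^2_a$, the statuses (basic, $1$-promoted, $2$-promoted), and the women's accept/reject decisions according to Definition~\ref{def:proposals_preferences}. Since the algorithm is nondeterministic (the order in which men propose, and the tie-breaking rule among least desirable proposals, are unspecified), it suffices to fix a particular execution; I would then extract $M$ as a maximum matching in the resulting graph $G'$ that saturates every degree-$2$ vertex.

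Third, I would identify the intended larger stable matching $\opt$ explicitly and verify its stability by checking, for each edge $(a,b) \in E \setminus \opt$, that at least one of $a, b$ weakly prefers their $\opt$-partner, so $(a,b)$ is not blocking. The final edge count yields $|\opt|/|M| = 13/9$ immediately; to rule out any additive slack and to make the bound tight in the limit for arbitrarily large instances, I would also consider taking $n$ disjoint copies of the basic gadget, giving an instance of size $\Theta(n)$ with $|\opt|/|M| = 13n/9n = 13/9$ for every $n$.

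The main obstacle is the careful bookkeeping required to show that the intended $M$ is a genuine possible output of the algorithm: one must fix a proposal order and tie-breaking rule consistent with Definition~\ref{def:proposals_preferences} such that the simulation terminates with exactly the graph $G'$ envisaged by the construction, and then argue that the chosen $M$ is a valid maximum matching in $G'$ (in particular saturating every degree-$2$ vertex). Once the execution trace is pinned down, stability of $\opt$ and the edge-counting step are elementary.
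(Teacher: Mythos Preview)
Your proposal is correct and follows essentially the same route as the paper: exhibit Radnai's explicit $13$-vertex-per-side instance, give a concrete execution trace of the proposal phase (fixing a proposal order and tie-breaks consistent with Definition~\ref{def:proposals_preferences}) that yields the intended $G'$ and $M$, and then verify stability of $\opt$ and the cardinality count $|M|=9$, $|\opt|=13$. Your preliminary analysis that equality in the bound forces $t=3\ell_\Sigma$, $k=\ell_\Sigma$, and hence three $5$-augmenting paths plus one $7$-augmenting path, is a nice structural explanation that the paper does not make explicit; the disjoint-copies remark is also a harmless addition (the paper omits it since a single copy already suffices for tightness).
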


\begin{proof}
The instance in Figure~\ref{figure: tight_instance} shows that the bound in Theorem~\ref{theorem:approximation_guarantee} is tight. In Figure~\ref{figure: tight_instance}, the circle nodes represent men and the square nodes represent women. The dashed lines represent the edges in $\opt$, the solid lines represent  the edges in $G'$, where the horizontal solid lines represent the edges in $M$.
The arrow tips indicate the preferences of each person. For example, $b_0^3$
has two tips arrows pointing to $a^2_2$ and $a_1^3$ and one tip arrow pointing to
$a_0^3$, so $b_0^3$ is indifferent between $a^2_2$ and $a_1^3$ and prefers both
to $a_0^3$.

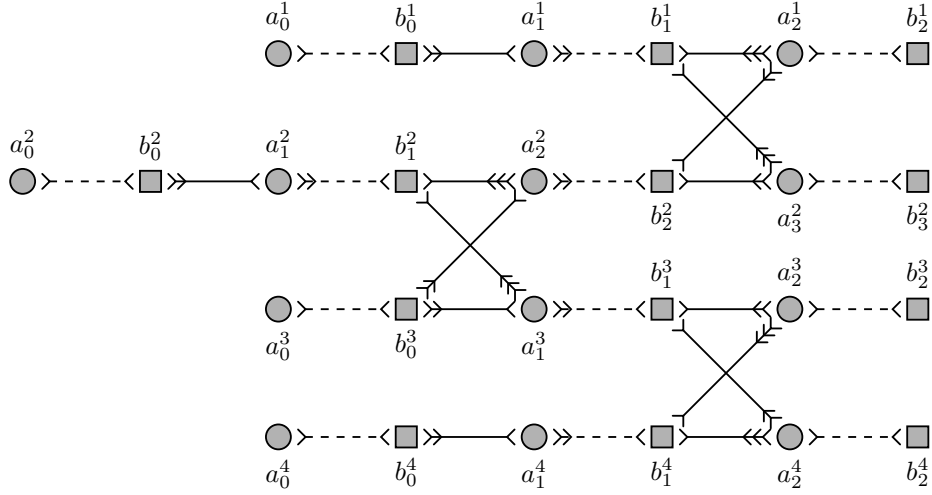
\begin{figure}
\begin{center}
\begin{tikzpicture}[scale=1.7]
\tikzstyle{every node}=[rectangle, fill = black!30!white, draw = black, minimum size = 8pt, line width = 0.7pt]
\tikzstyle{edge}=[> = angle 90, shorten >= 2pt, shorten < = 2pt, line width = 0.7pt]
\tikzstyle{man}=[circle]

\node[label=$a_0^1$, man] (f1a0) at (0, 0) {};
\node[label=$b_0^1$,    ] (f1b0) at (1, 0) {};
\node[label=$a_1^1$, man] (f1a1) at (2, 0) {};
\node[label=$b_1^1$,    ] (f1b1) at (3, 0) {};
\node[label=$a_2^1$, man] (f1a2) at (4, 0) {};
\node[label=$b_2^1$,    ] (f1b2) at (5, 0) {};

\node[label=$a_0^2$, man] (s1a0) at (-2, -1) {};
\node[label=$b_0^2$,    ] (s1b0) at (-1, -1) {};
\node[label=$a_1^2$, man] (s1a1) at (0, -1) {};
\node[label=$b_1^2$,    ] (s1b1) at (1, -1) {};
\node[label=$a_2^2$, man] (s1a2) at (2, -1) {};
\node[label=below:$b_2^2$,    ] (s1b2) at (3, -1) {};
\node[label=below:$a_3^2$, man] (s1a3) at (4, -1) {};
\node[label=below:$b_3^2$,    ] (s1b3) at (5, -1) {};

\node[label=below:$a_0^3$, man] (f2a0) at (0, -2) {};
\node[label=below:$b_0^3$,    ] (f2b0) at (1, -2) {};
\node[label=below:$a_1^3$, man] (f2a1) at (2, -2) {};
\node[label=$b_1^3$,    ] (f2b1) at (3, -2) {};
\node[label=$a_2^3$, man] (f2a2) at (4, -2) {};
\node[label=$b_2^3$,    ] (f2b2) at (5, -2) {};

\node[label=below:$a_0^4$, man] (f3a0) at (0, -3) {};
\node[label=below:$b_0^4$,    ] (f3b0) at (1, -3) {};
\node[label=below:$a_1^4$, man] (f3a1) at (2, -3) {};
\node[label=below:$b_1^4$,    ] (f3b1) at (3, -3) {};
\node[label=below:$a_2^4$, man] (f3a2) at (4, -3) {};
\node[label=below:$b_2^4$,    ] (f3b2) at (5, -3) {};

\draw[edge, dashed, >-<] (f1a0) -- (f1b0);
\draw[edge,         >>-<] (f1b0) -- (f1a1);
\draw[edge, dashed, >>-<] (f1a1) -- (f1b1);
\draw[edge,         >-<<<] (f1b1) -- (f1a2);
\draw[edge, dashed, >-<] (f1a2) -- (f1b2);

\draw[edge, dashed, >-<] (s1a0) -- (s1b0);
\draw[edge,         >>-<] (s1b0) -- (s1a1);
\draw[edge, dashed, >>-<] (s1a1) -- (s1b1);
\draw[edge,         >-<<<] (s1b1) -- (s1a2);
\draw[edge, dashed, >>-<] (s1a2) -- (s1b2);
\draw[edge,         >-<<] (s1b2) -- (s1a3);
\draw[edge, dashed, >-<] (s1a3) -- (s1b3);

\draw[edge, dashed, >-<] (f2a0) -- (f2b0);
\draw[edge,         >>-<] (f2b0) -- (f2a1);
\draw[edge, dashed, >>-<] (f2a1) -- (f2b1);
\draw[edge,         >-<<] (f2b1) -- (f2a2);
\draw[edge, dashed, >-<] (f2a2) -- (f2b2);

\draw[edge, dashed, >-<] (f3a0) -- (f3b0);
\draw[edge,         >>-<] (f3b0) -- (f3a1);
\draw[edge, dashed, >>-<] (f3a1) -- (f3b1);
\draw[edge,         >-<<<] (f3b1) -- (f3a2);
\draw[edge, dashed, >-<] (f3a2) -- (f3b2);

\draw[edge,         >>-<] (f1a2) -- (s1b2);
\draw[edge,         >>>-<] (s1a3) -- (f1b1);

\draw[edge,         >-<<] (s1a2) -- (f2b0);
\draw[edge,         >>>-<] (f2a1) -- (s1b1);

\draw[edge,         >>>-<] (f2a2) -- (f3b1);
\draw[edge,         >>-<] (f3a2) -- (f2b1);
\end{tikzpicture}
\end{center}
\caption{An instance for which the algorithm in~\cite{HuangKavitha2015} outputs a stable matching $M$ with $|M|= \frac{9}{13}\opt$ }
\label{figure: tight_instance}
\end{figure}

It is straightforward to check that $\opt$ is stable. Clearly, $M$ is a maximum matching in $G'$. We see that $\abs{M} = 9$ and $\abs{\opt} = 13$.

Let us verify that $M$ is indeed a matching output by the algorithm in~\cite{HuangKavitha2015}. For this it is enough to show, that the graph $G'$ can be obtained in the following run of the algorithm:
\begin{itemize}
\item $a_2^1$ proposes to $b_1^1$ twice; $b_1^1$ accepts twice.
\item $a^2_3$ proposes to $b_1^1$ twice; $b_1^1$ rejects his first proposal; for his second proposal, $b_1^1$ rejects one of $a_2^1$'s proposals.

\item $a^2_2$ proposes to $b^2_1$ twice; $b^2_1$ accepts twice.
\item $a_1^3$ proposes to $b^2_1$ twice; $b^2_1$ rejects his first proposal; for his second proposal, $b^2_1$ rejects one of $a^2_2$'s proposals.

\item $a_2^3$ proposes to $b_1^4$ twice; $b_1^4$ accepts twice.
\item $a_2^4$ proposes to $b_1^4$ twice; $b_1^4$ rejects his first proposal; for his second proposal, $b_1^4$ rejects one of $a_2^3$'s proposals.

\item $a_2^1$ proposes to $b^2_2$; $b^2_2$ accepts.
\item $a_3^2$ proposes to $b^2_2$; $b^2_2$ accepts.

\item $a_2^3$ proposes to $b_1^3$; $b_1^3$ accepts.
\item $a_2^4$ proposes to $b_1^3$; $b_1^3$ accepts.

\item $a_2^2$ proposes to $b^2_2$; $b^2_2$ chooses arbitrarily to reject $a_2^2$.
\item $a_1^3$ proposes to $b_1^3$; $b_1^3$ chooses arbitrarily to reject $a_1^3$.

\item $a_2^2$ proposes to $b_0^3$; $b_0^3$ accepts.
\item $a_1^3$ proposes to $b_0^3$; $b_0^3$ accepts.

\item $a_1^1$ proposes to $b_1^1$ twice; $b_1^1$ arbitrarily chooses to reject him twice.
\item $a_1^1$ proposes to $b_0^1$ twice; $b_0^1$ accepts twice.

\item $a_1^2$ proposes to $b^2_1$ twice; $b^2_1$ arbitrarily chooses to reject $a_1^2$ twice.
\item $a_1^2$ proposes to $b^2_0$ twice; $b^2_0$ accepts twice.

\item $a_1^4$ proposes to $b_1^4$ twice; $b_1^4$ arbitrarily chooses to reject $a_1^4$ twice.
\item $a_1^4$ proposes to $b_0^4$ twice; $b_0^4$ accepts twice.

\item $a_0^1$ proposes to $b_0^1$ several times; $b_0^1$ rejects him until $a_0^1$ gives up.
\item $a_0^2$ proposes to $b^2_0$ several times; $b^2_0$ rejects him until $a_0^2$ gives up.
\item $a_0^3$ proposes to $b_0^3$ several times; $b_0^3$ rejects him until $a_0^3$ gives up.
\item $a_0^4$ proposes to $b_0^4$ several times; $b_0^4$ rejects him until $a_0^4$ gives up.
\end{itemize}

Note that in this example, all men are either basic at the end of the algorithm, or are always rejected during the algorithm (even as $2$-promoted men). This means that simply adding more levels of promotion would not increase the algorithm's worst-case performance.
\end{proof}

\newpage
\bibliographystyle{plain}
\bibliography{bib}

\newpage
\section{Appendix}
\label{sec:appendix}

\subsection*{Proofs from Section~\ref{sec:tight_analysis}}

\begin{proof}[Proof of Lemma~\ref{lemma:three_path_original_paper}]
Let $\alpha_0-\beta_0-\alpha_1-\beta_1$ be a $3$-augmenting path in  $M\oplus \opt$, where $\alpha_0$, $\alpha_1\in A$ and $\beta_0$, $\beta_1\in B$. Since $\alpha_0$ is not matched by $M$, $\alpha_0$ is $2$-promoted. Since $\beta_1$ is not matched by $M$, $\alpha_1$ is basic. Moreover, $\alpha_1$ prefers $\beta_0$ to $\beta_1$, since otherwise $M$ has a blocking pair $(\alpha_1,\beta_1)$. Since $\alpha_1$ is basic, $\alpha_0$ is $2$-promoted and at the end of the algorithm $\beta_0$ has a proposal from $\alpha_1$,  we have that $\beta_0$ prefers $\alpha_1$ to $\alpha_0$, showing that $\opt$ has a blocking pair $(\alpha_1,\beta_0)$.
\end{proof}

\begin{proof}[Proof of Lemma~\ref{lemma:five_path_original_paper}]

First, $\alpha_0$ is not matched in $M$ and hence only one proposal of $\alpha_0$ is accepted at the end of the algorithm. Hence, $\alpha_0$ is $2$-promoted and $\alpha_0$ was rejected by $\beta_0$ as a $2$-promoted man.

Second, $\beta_2$ is not matched by $M$ and hence $\alpha_2$ was never rejected by $\beta_2$. Thus, $\alpha_2$ is basic. Moreover,  $\alpha_2$ prefers~$\beta_1$ to~$\beta_2$, since otherwise $M$ has a blocking pair $(\alpha_2,\beta_2)$.

Let us prove the first part of the third statement. If $\alpha_1$ is $2$-promoted, then $\alpha_1$ was rejected by $\beta_1$ as a $1$-promoted man. Since at the end of the algorithm $\beta_1$  has a proposal of the basic man $\alpha_2$, then   $\beta_1$ prefers $\alpha_2$ to $\alpha_1$. However, then $(\alpha_2,\beta_1)$ is a blocking pair for~$\opt$, contradiction.

Let us now prove the second part of the third statement. If $\alpha_1$ prefers $\beta_0$ to $\beta_1$, then $\beta_0$ does not prefer $\alpha_1$ to $\alpha_0$, since otherwise $(\alpha_1,\beta_0)$ is a blocking pair for $\opt$. Thus, either $\beta_0$ prefers $\alpha_0$ to $\alpha_1$ or  $\beta_0$ is indifferent between $\alpha_0$ and $\alpha_1$.  In both cases, it is impossible at the end of the algorithm to have the situation when  $\beta_0$ has a proposal of $\alpha_1$, while $\alpha_0$ has at most one his proposal accepted,  $\alpha_0$ is $2$-promoted and $\alpha_1$ is not $2$-promoted.

The forth statement follows from the fact that $\alpha_2$ prefers $\beta_1$ to $\beta_2$ and $\alpha_1$ prefers $\beta_1$ to $\beta_0$. Thus, if $\beta_1$ is not indifferent between $\alpha_1$ and $\alpha_2$, then either $\opt$ or $M$ is not stable.

For the fifth and sixth statements, it is enough to note that at the end of the algorithm $\alpha_1$ and $\alpha_2$ have both of their proposals accepted, while $\beta_0$ and $\beta_1$ hold two proposals.
\end{proof}

\begin{proof}[Proof of Lemma~\ref{lemma:no_yx_arc}(a)]
Suppose there is such an arc, say an arc $([a], [b])$, $a\in A$, $b\in B$. Since $a$ is in a $y$-node, $\opt(a)$ is not matched by $M$. Thus, $a$ was never rejected by $\opt(a)$. Hence, $a$ is basic and $a$ prefers $b$ over $\opt(a)$. 

Since $b$ is in an $x$-node, $\opt(b)$ is not matched by $M$. Thus, $\opt(b)$ is $2$-promoted. Moreover, $b$ prefers $a$ over $\opt(b)$, since $b$ accepted a basic proposal of $a$ while rejecting a $2$-promoted proposal of $\opt(b)$. This implies that $(a,b)$ is a blocking edge for the stable matching $\opt$, contradiction. 
\end{proof}

\begin{proof}[Proof of Lemma~\ref{lemma:critical_arc}(b)]
Due to the definition of a critical arc, no critical arc starts at a $y$-node. Let us show that no critical arc ends at an $x$-node. Suppose there is a critical arc $([a],[b])$, $a\in A$, $b\in B$ such that $b$ is in an $x$-node. 

Since $b$ is in an $x$-node, $\opt(b)$ is not matched by $M$, and hence $\opt(b)$ is $2$-promoted. Moreover, by the definition of a critical arc, $a$ is not $2$-promoted. Thus, $b$ prefers $a$ over $\opt(b)$, since $b$ accepted a not-2-promoted proposal of $a$ while rejecting a $2$-promoted proposal of $\opt(b)$. 

By the definition of a critical arc, $b\geq_a \opt(a)$. Note that $b$ is not equal to $\opt(a)$, because $\opt(b)$ is $2$-promoted and $a$ is not $2$-promoted. Hence, $a$ prefers $b$ over $\opt(a)$. This implies that $(a,b)$ is a blocking edge for the stable matching $\opt$, contradiction.  
\end{proof}

\begin{proof}[Proof of Lemma~\ref{lemma:critical_arc_is_in_some_good_path}(c)]

Before showing that every critical arc is contained in some good path, let us make some observations about $x$-nodes and $y$-nodes in the graph $H$.

In the graph $H$, every $y$-node is either isolated or has an outgoing arc in $H$: If the $y$-node consists of $a\in A$ and $b\in B$, then $a$ is basic and hence at the end of the algorithm both his proposals are accepted. If these two proposals are accepted by the same woman, namely $b$, then the $y$-node is isolated. Otherwise, the $y$-node has an outgoing arc.

Similarly, every $x$-node is either isolated or has an incoming arc. Indeed, if the $x$-node consists of $a\in A$ and $b\in B$, then $b$ rejected $\opt(b)$ at least once, and hence at the end of the algorithm $b$ holds two proposals. If these two proposals are both from the same man, namely $a$, then the $x$-node is isolated. Otherwise, the $x$-node has an incoming arc.

Now, let us consider a critical arc $([a],[b])$, $a\in A$, $b\in B$ in the graph $H$. Let us construct a good path in $H$ by constructing a subpath from $[b]$ to a blue woman and a subpath from a blue man to $[a]$. If $[b]$ is a blue woman, then the subpath from $[b]$ to a blue woman is a trivial path. Otherwise, $[b]$ is a $[y]$-node, since by Lemma~\ref{lemma:critical_arc}(b), $[b]$ is not an $x$-node. Note that $[b]$ is not isolated and thus $[b]$ has an outgoing arc. Moreover, by Lemma~\ref{lemma:no_yx_arc}(a) this outgoing arc is leading to either a blue woman or a $y$-node. In case this arc leads to a blue woman, we stop, otherwise we continue until we reach a blue woman in $H$. 

We construct analogously a subpath from a blue man to $[a]$. From the construction it is clear that the path is a good path.
\end{proof}

\begin{proof}[Proof of Lemma~\ref{lemma:good_path_has_at_most_one_critical_arc}(d)]
Note that all the internal nodes of a good path must be $x$- or $y$-nodes. To see this, assume that
$c$ is an internal blue node of a good path. Then, $c$ is incident to two distinct edges in $H$, none of which is in $M$, contradicting the assumption that $M$ matches all degree two nodes in the graph $H$.
The statement then follows from Lemma~\ref{lemma:no_yx_arc}(b) and Lemma~\ref{lemma:critical_arc}(c).
\end{proof}

\begin{proof}[Proof of Lemma~\ref{lemma:good_paths_are_node_disjoint}(e)]


First, note that two distinct good paths cannot have common end nodes.
Let us assume that $c$ is a common end node of two distinct good paths. So $c$ is incident to two distinct edges in $H$, none of which is in $M$, contradicting the assumption that $M$ matches all degree two nodes in the graph $H$.

Now recall that all nodes in $H$ have degree at most $2$. If two distinct good paths have a common internal node, since the end nodes are different,
then necessarily there is a node with degree $3$ in $H$, a contradiction.

\end{proof}

\end{document}